\newtheorem{theorem}{Theorem}
\newtheorem*{theorem*}{Theorem}
\newtheorem{proposition}{Proposition}
\newtheorem{lemma}{Lemma}
\newtheorem{corollary}{Corollary}
\theoremstyle{definition}
\newtheorem{defn}{Definition}
\theoremstyle{remark}
\newtheorem*{remark}{Remark}
\newcommand{\eqdef}{\vcentcolon=}
\newcommand{\x}{\ensuremath{\mathbf{x}}}
\newcommand{\X}{\ensuremath{\mathbf{X}}}
\newcommand{\Y}{\ensuremath{\mathbf{Y}}}
\renewcommand{\d}{\ensuremath{\mathbf{d}}}
\renewcommand{\i}{\ensuremath{\mathbf{i}}}
\renewcommand{\j}{\ensuremath{\mathbf{j}}}
\newcommand{\y}{\ensuremath{\mathbf{y}}}
\renewcommand{\c}{\ensuremath{\mathbf{c}}}
\newcommand{\w}{\ensuremath{\mathbf{w}}}
\renewcommand{\v}{\ensuremath{\mathbf{v}}}
\renewcommand{\a}{\ensuremath{\mathbf{a}}}
\renewcommand{\b}{\ensuremath{\mathbf{b}}}
\newcommand{\0}{\ensuremath{\mathbf{0}}}
\newcommand{\F}{\ensuremath{\mathbb{F}}}
\newcommand{\cF}{\ensuremath{\mathcal{F}}}
\newcommand{\C}{\ensuremath{\mathcal{C}}}
\newcommand{\A}{\ensuremath{\mathfrak{A}}}
\renewcommand{\L}{\ensuremath{\mathcal{L}}}
\newcommand{\Z}{\ensuremath{\mathbb{Z}}}
\renewcommand{\epsilon}{\varepsilon}
\newcommand{\Mod}[1]{\ (\mathrm{mod}\ #1)}
\newcommand{\Modsp}[2]{\ (\mathrm{mod}_{#2}^*\ #1)}
\newcolumntype{C}{>{$}c<{$}} %
\begin{document}
\title{Lifted Multiplicity Codes}

\date{}
\author{Lukas~Holzbaur\thanks{Technical University of Munich. Email: {\tt lukas.holzbaur@tum.de}. Supported by the Technical University of Munich -- Institute for Advanced Study, funded by the German Excellence Initiative and European Union 7th Framework Programme under Grant Agreement No. 291763 and the German Research Foundation under Grant No. WA3907/1-1.} \and Rina~Polyanskaya\thanks{Institute for Information Transmission Problems. Email: {\tt rev-rina@yandex.ru}. Supported in part by the Russian Foundation for Basic Research (RFBR) under Grant No.~\mbox{20-01-00559}.} \and Nikita Polyanskii\thanks{Technical University of Munich. Email: {\tt nikita.polyansky@gmail.com}. Supported in part by the German Israeli Project Cooperation (DIP) grant under Grant No.~KR3517/9-1.} \and Ilya Vorobyev\thanks{Skolkovo Institute of Science and Technology. Email: {\tt vorobyev.i.v@yandex.ru}. Supported in part by RFBR and JSPS under Grant No.~20-51-50007, and by RFBR under Grant No.~20-01-00559.} \and Eitan Yaakobi\thanks{Technion --- Israel Institute of Science and Technology. Email: {\tt yaakobi@cs.technion.ac.il}. Supported in part by the Israel Science Foundation  under Grant No. 1817/18 and by the Technion Hiroshi Fujiwara Cyber Security Research Center and the Israel National Cyber Directorate.} }

\maketitle
\begin{abstract}
Lifted Reed-Solomon codes and multiplicity codes are two classes of evaluation codes that allow for the design of high-rate codes that can recover every codeword or information symbol from many disjoint sets. Recently, the underlying approaches have been combined to construct lifted bi-variate multiplicity codes, that can further improve on the rate. We continue the study of these codes by providing lower bounds on the rate and distance for lifted multiplicity codes obtained from polynomials in an arbitrary number of variables.

 Specifically, we investigate a subcode of a lifted multiplicity code formed by the linear span of $m$-variate monomials whose restriction to an arbitrary line in $\F_q^m$ is equivalent to a low-degree uni-variate polynomial. We find the tight asymptotic behavior of the fraction of such monomials when the number of variables $m$ is fixed and the alphabet size $q=2^\ell$ is large.

  For some parameter regimes, lifted multiplicity codes  are then shown to have a better trade-off between redundancy and  the number of disjoint recovering sets for every codeword or information symbol than previously known constructions. Additionally, we present a local self-correction algorithm for lifted multiplicity codes.
	\end{abstract}

    \newpage
\section{Introduction}

The concepts of \emph{locality} and \emph{availability} of codes have been subject to intensive studies.  Informally, the locality of a code refers to the number of codeword symbols that needs to be accessed in order to recover a single codeword or information symbol and availability is the number of such (disjoint) recovery sets. These properties are of interest in a variety of applications, such as load balancing in distributed data storage, cryptography, and low-complexity error correction/detection. Several different notions related to these parameters have been considered in literature, including, but not limited to, locally recoverable codes (LRCs) \cite{huang2013pyramid,GHSY12}, locally decodable/correctable codes (LDCs/LCCs)~\cite{katz2000efficiency,yekhanin2012locally}, relaxed LCCs~\cite{gur2018relaxed} and LDCs~\cite{ben2006robust}, batch codes~\cite{holzbaur2020lifted,ishai2004batch}, PIR codes~\cite{fazeli2015pir}, and codes with the disjoint repair group property (DRGP)~\cite{li2019lifted}.

Reed-Muller (RM) codes are a popular class of codes that can provide strong locality and availability properties, as already exploited in the early majority-logic decoding algorithms \cite{reed1954class}.
These codes are defined as the evaluation of multi-variate polynomials up to a specific degree in all points of a multi-dimensional space. Their restriction to the evaluation points that fall on one line in this evaluation space can readily be seen to be equivalent to the evaluation of a uni-variate polynomial in the variable over the one-dimensional space spanned by this line. If the degree of this uni-variate polynomial is low, these positions form a codeword of a (non-trivial) Reed-Solomon (RS) code, another well-studied class of evaluation codes. This principle can be exploited to show locality and availability properties of the RM code. The locality properties of RM codes have been subject to extensive study (see, e.g., \cite{arora2003improved,alon2005testing,rubinfeld1996robust}).  However, the obvious drawback of RM codes with nice local recovery properties is their rather low rate of $R\leq 1/2$.

 To overcome this issue of low rate, the concept of \emph{lifted RS codes} was introduced in~\cite{guo2013new}. Instead of evaluating only multi-variate polynomials of a limited degree, as in RM codes, these codes consist of the evaluation of all polynomials that are equivalent to the evaluation of a low-degree uni-variate polynomial when restricted to a line. Using this concept of lifting, which first appeared in \cite{ben2011symmetric} in the context of LDPC codes, \cite{guo2013new} presents constructions of codes from multi-variate polynomials along with good bounds on the redundancy for the bi-variate case. These lead to codes of considerably higher rate than RM codes, which, broadly speaking, preserve the locality properties of the RM code. The main highlight of these codes is a construction of high-rate high-error LCCs. As a conceptual result, it was shown~\cite{guo2013new} that any polynomial producing a codeword of the lifted RS code can be decomposed to a linear combination of \textit{good} monomials whose restriction to lines are low-degree. Thus, the code rate is equal to the \textit{fraction} of good monomials. In \cite{holzbaur2020lifted}, for a fixed number of variables and large field size,  the asymptotic behaviour of this fraction was established. This improved on the estimate of rate of lifted RS codes for all cases with more than two variables. We remark that the distance properties of these codes follow from the fact that each symbol has many disjoint recovering sets and, thus, the relative distance of lifted RS codes is similar to the one of RM codes.

\emph{Multiplicity codes} \cite{kopparty2014high} are another recently introduced class of codes based on RM codes with good locality properties. Here, instead of each codeword symbol only consisting of the evaluation of a degree-restricted multi-variate polynomial, each symbol also contains the evaluation of all the derivatives of this polynomial up to some order. Similar to the concept of lifting, this generalization provides codes with significantly better rate than RM codes, while providing good locality properties. In particular, it was proved~\cite{kopparty2014high} that multiplicity codes represent a family of high-rate LCCs that have very efficient local decoding algorithms. The analysis of the rate of multiplicity codes is rather straightforward, whereas distance properties are implied by a bound on the number of points that a
low-degree polynomial can vanish on with high multiplicity.

As both lifted RS codes and multiplicity codes are based on generalizations of RM codes, it is a natural question whether these techniques can be combined to further improve the parameters of the respective codes. Some progress in the study of these \emph{lifted multiplicity codes} has recently been made in \cite{wu2015revisiting,li2019lifted}. In \cite{wu2015revisiting} the authors show asymptotic results for any number of variables. The focus of \cite{li2019lifted} is on improving the existence bounds on the required redundancy in the bi-variate case.

\subsection{Our contribution}

In this work we continue the study of lifted multiplicity codes by generalizing the results on the bi-variate case of \cite{li2019lifted} to an arbitrary number of variables. We investigate essentially the same class of codes as defined in~\cite{li2019lifted,wu2015revisiting}. Informally, the $[m,s,d,q]$ lifted multiplicity code consists of the evaluation (together with the derivatives up to the $s$th order) of polynomials from $\F_q[X_1,\ldots,X_m]$ whose restriction to a line agrees with some polynomial of degree less than $d$ on its first $s-1$ derivatives. Note that the condition $d<qs$ guarantees~\cite{wu2015revisiting,li2019lifted} that the all-zero codeword is produced only by the zero polynomial and, therefore, we fix $d=qs-r$ for some integer $r$.

Following a classic idea, we consider a subcode of a lifted multiplicity code formed by the linear span of \textit{good} monomials whose restriction to a line is equivalent to a low-degree polynomial. To count bad monomials, we first make use of our recent result~\cite{holzbaur2020lifted} for lifted RS codes $(s=1)$ and then extend it for larger $s$. Roughly speaking, we prove that there exists a one-to-$\binom{s+m-1}{m-1}$ correspondence between bad monomials for lifted RS codes and groups of  bad monomials for lifted multiplicity codes.  This enables us to find the exact asymptotic order of the number of bad monomials when $q$ is large (for more details, see Section~\ref{ss::rate and distance of lifted multiplicity codes}). Unfortunately, unlike lifted RS codes, there is no nice structural result saying that a good polynomial of a lifted multiplicity code can be decomposed into a linear combination of good monomials (for a counterexample see Section~ \ref{ss::equivalenceLiftedRS}). However, the fraction of good monomials serves as a lower bound on the rate of a lifted multiplicity code. Compared to prior works, our estimate is consistent with~\cite{li2019lifted} for $m=2$ and better than the result of~\cite{wu2015revisiting} for any $m\ge 2$.

Let $\binom{m}{\ge b}$ denote the number of ways to choose an (unordered) set of at least $b$ elements from a fixed set of size $m$. Our main contribution is summarized in the following statement.
\begin{theorem*}[Parameters of lifted multiplicity code]\ \\
\textbf{Code rate:} For powers of two $q$ and $s<q$ and a positive integer $r<q$, the rate of the $[m,s,qs-r,q]$ lifted multiplicity code is
$$
1 - O_m\left(s^{-1}(q/r)^{\log \lambda_m - m}\right)\quad \text{as } q\to\infty,
$$
where $\lambda_m$ is the largest eigenvalue of the matrix
$$
A_m \eqdef\left(\begin{smallmatrix}
\binom{m}{\ge 1} & \binom{m}{ 0} & 0 & 0 & \dots & 0 \\
\binom{m}{\ge 3} & \binom{m}{2} &\binom{m}{1} & \binom{m}{0} & \dots & 0 \\
\vdots & \vdots & \vdots & \vdots & \ddots & \vdots \\
\binom{m}{\ge 2j+1} & \binom{m}{ 2j} & \binom{m}{ 2j-1} & \binom{m}{ 2j-2} & \dots & \binom{m}{ 2j-m+2}
\\
\vdots & \vdots & \vdots & \vdots & \ddots & \vdots \\
\binom{m}{\ge 2m-1} & \binom{m}{2m-2} & \binom{m}{ 2m-3} & \binom{m}{ 2m-4} & \dots & \binom{m}{ m}
\end{smallmatrix}\right).
$$
\textbf{Distance:} For $r,s<q$, the relative distance $\Delta$ of the $[m,s,qs-r,q]$ lifted multiplicity code is
$$
\Delta\geq \Delta_{min}\coloneqq \left \lceil\frac{r-s+1}{s}\right\rceil\frac{q-s}{q^2}.
$$
For $s=o(q)$, $\Delta_{min}= \frac{r}{qs}(1+o(1))$.\\
\textbf{Availability:} Each symbol of a codeword of the $[m,s,qs-s,q]$ lifted multiplicity code can be reconstructed in $\lfloor q/s \rfloor^{m-1}$ different ways, each of which involves a disjoint set of coordinates of the codeword with cardinality $s^{m-1}(q-1)$.\\
\textbf{Local self-correction:} For $s^{m-2}=o(\log q)$ and $r<q$, let $\y$ be a noisy version of a codeword $\c$ of the $[m,s,qs-r,q]$ lifted multiplicity code such that the relative distance $\Delta(\y,\c)< \alpha\Delta_{min}$ with $0<\alpha<1/4$. Then for any $i\in[q^m]$, there exists a randomized algorithm $\A$ that makes at most $(q-1)s^{m-1}$ queries to $\y$ and reconstructs $c_i$ correctly with probability at least $1-2\alpha + o(1)$.

\end{theorem*}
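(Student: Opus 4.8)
\emph{Plan for the code rate.} Following the discussion above, I would lower bound the rate by the fraction of \emph{good} monomials — $m$-variate monomials whose restriction to every line is equivalent (agrees on derivatives up to order $s-1$) to a univariate polynomial of degree $<d=qs-r$ — and then estimate that fraction. First I would write the restriction of a monomial $\prod_i X_i^{d_i}$ to a parametrized line and expand via the binomial theorem; since $q=2^\ell$, Lucas' theorem shows that which terms survive, and hence whether the reduced $t$-degree stays below $d$, is governed by the base-$2$ digits of the exponents $d_i$ together with a bound (controlled by $r$ and $s$) on the admissible $t$-exponent after carrying. For $s=1$ this is exactly the situation of~\cite{holzbaur2020lifted}, where the bad monomials are counted by a linear recursion on the number of binary digits whose transfer matrix is $A_m$, yielding a count of order $q^m(q/r)^{\log\lambda_m-m}$ up to $m$-dependent constants. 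The new ingredient for $s>1$ is a $\binom{s+m-1}{m-1}$-to-one correspondence between bad monomials of the $[m,s,qs-r,q]$ code and bad monomials of an associated lifted RS code ($s=1$ with a radius rescaled by $s$), obtained by splitting off from the exponent vector a degree-$<s$ ``offset'' monomial: there are $\binom{s+m-1}{m-1}$ such offsets, and the rescaling of the radius is what produces the $s^{-1}$ factor. Composing with the $s=1$ count gives the fraction of bad monomials as $\Theta_m(s^{-1}(q/r)^{\log\lambda_m-m})$, hence the claimed rate.

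\emph{Plan for the distance.} This is the classical line-restriction argument. Since $d<qs$ the evaluation map is injective, so a nonzero codeword comes from a nonzero $f$, and $f$ fails to vanish to order $\ge s$ at some point $\mathbf p$. For every line $L$ through $\mathbf p$ the codeword restricted to $L$ is the multiplicity-$s$ evaluation of a univariate polynomial $g_L$ of degree $<d=qs-r$, and $g_L\ne0$ because the restriction is nonzero at $\mathbf p\in L$; a nonzero polynomial of degree $<qs-r$ vanishes to order $\ge s$ at most $\lfloor(d-1)/s\rfloor=q-1-\lceil(r-s+1)/s\rceil$ points, so $L$ carries at least $\lceil(r-s+1)/s\rceil$ nonzero symbols besides $\mathbf p$. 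Summing over the $(q^m-1)/(q-1)$ lines through $\mathbf p$, which meet pairwise only in $\mathbf p$, gives at least $1+\tfrac{q^m-1}{q-1}\lceil(r-s+1)/s\rceil$ nonzero symbols; dividing by $q^m$ gives a relative distance that is at least $\Delta_{min}$, and $\Delta_{min}=\tfrac{r}{qs}(1+o(1))$ for $s=o(q)$ since $\lceil(r-s+1)/s\rceil=\tfrac{r}{s}(1+o(1))$.

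\emph{Plan for availability ($r=s$, $d=(q-1)s$).} To reconstruct the symbol at $\mathbf x$, fix coordinates so directions are $\mathbf v(\mathbf a)=(1,a_2,\dots,a_m)$; the codeword values at the $q-1$ points $t\ne0$ of the line $\mathbf x+t\,\mathbf v(\mathbf a)$ determine the unique polynomial of degree $<(q-1)s$ with those multiplicity-$s$ evaluations, hence the order-$<s$ Hasse derivatives of $f$ along $\mathbf v(\mathbf a)$ at $\mathbf x$. For each order $k<s$ this equals a polynomial $P_k(\mathbf a)$ of individual degree $<s$ whose coefficients are precisely $\{f^{(\mathbf i)}(\mathbf x):|\mathbf i|=k\}$, so letting $\mathbf a$ range over a product set $S_2\times\dots\times S_m$ with $|S_i|=s$ lets us interpolate each $P_k$ and recover the whole symbol from $s^{m-1}$ lines. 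Partitioning $\F_q$ into $\lfloor q/s\rfloor$ blocks of size $s$ produces $\lfloor q/s\rfloor^{m-1}$ such direction sets whose associated point sets — distinct lines through $\mathbf x$ meet only in the excluded point $\mathbf x$ — are pairwise disjoint and of cardinality $s^{m-1}(q-1)$.

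\emph{Plan for local self-correction, and the main obstacle.} The algorithm picks one of the above families of $s^{m-1}$ lines through $\mathbf x$ \emph{at random} (randomizing the block partition), queries $\mathbf y$ on the $s^{m-1}(q-1)$ corresponding coordinates, decodes each line with a decoder for the univariate multiplicity code of degree $<d$ up to roughly half its minimum distance $\lceil(r-s+1)/s\rceil$, and assembles the recovered directional derivatives by the interpolation of the previous paragraph, using the redundancy of degree-$<s$ polynomials on a product set to absorb the mis-decoded lines. For the success probability, a line through $\mathbf x$ fails to decode only if it carries more than $\tfrac12\lceil(r-s+1)/s\rceil$ errors; since each such line uses up that many of the at most $\alpha\Delta_{min}q^m$ errors and there are $(q^m-1)/(q-1)$ lines through $\mathbf x$, the fraction of failing lines is at most $2\alpha(q-s)/q=2\alpha-o(1)$ — this is precisely where the exact normalization $\Delta_{min}=\lceil(r-s+1)/s\rceil\tfrac{q-s}{q^2}$ is used. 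Hence a uniformly random line of the family decodes correctly except with probability $<2\alpha-o(1)$, and the hypothesis $s^{m-2}=o(\log q)$ is what lets a concentration bound guarantee that, with probability $1-o(1)$, few enough lines of the random family fail for the interpolation to succeed; together these give success probability $\ge1-2\alpha+o(1)$ at a cost of $(q-1)s^{m-1}$ queries. \textbf{The main obstacle} is exactly this last bookkeeping: extracting the clean constant $2\alpha$ — rather than one inflated by a factor growing with $s$ — requires exploiting the pairwise disjointness of the recovery sets, the exact shape of $\Delta_{min}$, and the correlations among the $s^{m-1}$ lines of a single random family, instead of a naive union bound; on the combinatorial side, the analogous delicate step is making the $\binom{s+m-1}{m-1}$-to-one correspondence precise enough to inherit the $s=1$ eigenvalue asymptotics.
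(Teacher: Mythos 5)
Your plan for the code rate is essentially the paper's route (reduce to the $s=1$ count of \cite{holzbaur2020lifted} via a factorization of the exponent vector), but the mechanism you propose for the $s^{-1}$ factor is off. The paper decomposes each type-$s$ exponent vector $\d\in\Z_{qs}^m$ as $d_j=\hat d_j q+d_j'$ with $\hat d_j\in\Z_s$, $d_j'\in\Z_q$, shows that badness forces $\deg(\hat\d)\in\{s-m,\dots,s-1\}$ together with $\d'$ being $(q-r,1)$-bad for the \emph{same} $r$, and then the $s^{-1}$ arises only after dividing the resulting $\Theta_m(s^{m-1}r^{m-\log\lambda_m}q^{\log\lambda_m})$ bad count by the total $\binom{s+m-1}{m}q^m=\Theta_m(s^mq^m)$. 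Your phrase ``an associated lifted RS code with a radius rescaled by $s$'' does not happen and, if you follow it through, it changes the $(q/r)^{\log\lambda_m-m}$ exponent to $(q s/r)^{\log\lambda_m-m}$ and therefore does not give the stated rate; the radius stays $r$ and the $s^{-1}$ is a denominator effect.

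The distance argument has a genuine gap, and it is exactly the error the paper goes out of its way to correct in prior work. You assert that for \emph{every} line $L$ through a point $\mathbf p$ with $f^{(<s)}(\mathbf p)\ne 0$ the restricted univariate polynomial $g_L$ is nonzero ``because the restriction is nonzero at $\mathbf p$.'' This is false: $g_L^{(j)}(0)=\sum_{\deg(\i)=j}f^{(\i)}(\mathbf p)\v^\i$ is a polynomial in the direction $\v$, and it can vanish for some directions even though some $f^{(\i_0)}(\mathbf p)\ne 0$. Indeed $g_L$ can be $\equiv_s 0$ on a positive fraction of lines through $\mathbf p$. The paper fixes this by picking $\i_0$ with $\deg(\i_0)=i_0<s$ and $f^{(\i_0)}(\mathbf p)\ne 0$, observing that $g_\v^{(i_0)}(0)=\sum_{\deg(\i)=i_0}f^{(\i)}(\mathbf p)\v^\i$ is a nonzero polynomial of degree $<s$ in $v_2,\dots,v_m$ hence vanishes on at most $sq^{m-2}$ directions, and therefore at least $(q-s)q^{m-2}$ lines through $\mathbf p$ carry a nonzero $g_L$ — that factor $(q-s)/q$ is precisely why $\Delta_{min}$ carries $(q-s)/q^2$ rather than the $1/q$ normalization your count would give. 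Your formula $1+\tfrac{q^m-1}{q-1}\lceil(r-s+1)/s\rceil$ is a stronger claim than the theorem states and is not actually established.

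The availability argument is correct and matches the paper: decode each line with degree bound $d=s(q-1)$ from its $q-1$ nonzero points, then interpolate the degree-$<s$ polynomials $P_k(\mathbf a)$ of directional derivatives on a product grid $S_2\times\dots\times S_m$ with $|S_i|=s$, and use $\lfloor q/s\rfloor^{m-1}$ disjoint block tilings of the directions.

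For local self-correction you correctly localize the difficulty but propose a different — and incomplete — mechanism. The paper's algorithm decodes only when \emph{all} $s^{m-1}$ lines of the chosen grid are good; it does not tolerate mis-decoded lines via interpolation redundancy. The success probability is obtained not by a concentration bound but by an existential Tur\'an-type theorem (Erd\H os) on $(m-1)$-partite $(m-1)$-uniform hypergraphs: as long as the good-direction hypergraph keeps more than $((m-1)q)^{m-1-1/s^{m-2}}$ edges one can peel off a $K^{(m-1)}_s$, and repeating this shows there are $q^{m-1}(1-2\alpha+o(1))-((m-1)q)^{m-1-1/s^{m-2}}$ all-good grids. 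The hypothesis $s^{m-2}=o(\log q)$ is exactly what makes the Tur\'an threshold $((m-1)q)^{m-1-1/s^{m-2}}$ be $o(q^{m-1})$, so it serves the extremal-graph step, not a Chernoff-type concentration. Your proposed ``absorb mis-decoded lines in the interpolation'' would need error-correction for evaluations of degree-$<s$ polynomials on a product grid, which is a genuinely different algorithm whose analysis you have not supplied.
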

The advantage of moving from lifted RS codes to lifted multiplicity codes is that the redundancy improves by a factor of $s$ (the order of derivatives), at the cost of the number of repair groups decreasing by a factor of $s^{m-1}$ and the logarithm of the alphabet size increasing by a factor of ${s + m - 1 \choose m}$. This means that lifted multiplicity codes cover more parameters of codes with good locality properties. For a relevant comparison, see the remarks after Lemmas~\ref{lem::known1}-\ref{lem::known2}.

Let us illustrate the improvement in the  rate of the $[m,s,qs-r,q]$ lifted multiplicity codes compared to the rate of the multiplicity code of order-$s$ evaluations of degree $qs-r$  polynomials in $m$ variables over $\F_q$~\cite[Lemma 7]{kopparty2014high}. Both types of codes have the same estimate on the relative distance $\Delta \ge \frac{r}{qs}(1+o(1))$. However, the rate of the multiplicity code is
$$
\frac{\binom{qs-r+m}{m}}{\binom{s+m-1}{m}q^m} < \left(\frac{qs-r+m}{(s+1/3) q}\right)^m \le 1 - \Omega_m\left(s^{-1}\right),
$$
which is smaller than the rate of lifted multiplicity codes as $\log \lambda_m< m$. Here, we point out that for large $m$, we are able to find the technical parameter $\lambda_m$ numerically only. We depict some values of $\lambda_m$ in Table~\ref{tab::eigenvalues}. This parameter stands for the exponential growth of the number of bad monomials. The inequality $\log \lambda_m< m$ follows from~\cite{guo2013new} implicitly, as the true exponent $\log \lambda_m$ was estimated by $m+\log \left(1-2^{-m\lceil \log m\rceil}\right)/\lceil \log m \rceil<m$. On the other hand, it is possible to estimate $\log \lambda_m$ from the other side by
\begin{equation}\label{eq::bounds on top eigenvalue}
\frac{-\log \left(1-2^{-m\lceil \log m\rceil}\right)}{\lceil \log m \rceil} \le m - \log \lambda_m \le -\log(1-2^{-m})
\end{equation}
and, thus, $m-\log \lambda_m >0$ vanishes as $m\to\infty$.
\begin{table}
	\centering
	\caption{The largest eigenvalue $\lambda_m$ of $A_m$, the resulting convergence rate $m-\log(\lambda_m)$ derived in~\cite{holzbaur2020lifted}, and the convergence rate $p_m$ of \cite{guo2013new} for different values of $m$.}
	\begin{tabular}{CCCC}
		m & \lambda_m & m-\log(\lambda_m) & p_m\\ \hline
		2 & 3.0000 & 4.1504 \times 10^{-1}& 4.1504 \times 10^{-1}\\
		3 & 7.2361 & 1.4479 \times 10^{-1}& 1.1360 \times 10^{-2}\\
		4 & 15.5436 & 4.1747 \times 10^{-2} & 2.8233 \times 10^{-3}\\
		5 & 31.7877 & 9.6043 \times 10^{-3} & 4.6986 \times 10^{-4}\\
		6 & 63.9217 & 1.7653 \times 10^{-3}& 1.1742 \times 10^{-4}\\
		7 & 127.9763 & 2.6714\times 10^{-4}& 2.9353\times 10^{-5} \\
		8 & 255.9939 & 3.4467 \times 10^{-5}& 2.8664 \times 10^{-8} \\
		9 & 511.9986 & 3.8959 \times 10^{-6} & 2.6872 \times 10^{-9}\\
		10 & 1023.9997 & 3.9323 \times 10^{-7}& 3.3590 \times 10^{-10}
	\end{tabular}
	\label{tab::eigenvalues}
\end{table}

Observe that if a good polynomial and its derivatives do not vanish on a point, then it can still be possible that the restrictions of the polynomial to some lines containing this point are equivalent to the zero polynomial. This fact was overlooked in~\cite{wu2015revisiting} when proving the distance property of lifted multiplicity codes. However, we can always say that the restriction of the polynomial to at least $(q-s)q^{m-2}$ lines crossing this point is equivalent to a non-zero uni-variate polynomial of degree less than $qs-r$ and, thus, the minimum distance of the code is at least $1+\lceil r/s - 1\rceil (q-s)q^{m-2}$ (for more details, see Section~\ref{ss::rate and distance of lifted multiplicity codes}).

Observe that the self-correction algorithm for multiplicity codes from~\cite{kopparty2014high} works well for lifted multiplicity codes. However, for small enough $s$, we present a slightly different local self-correction algorithm which requires $s\, 5^m$ times less locality. Here we combine two ideas: 1) for recovering of the evaluation of a polynomial and its derivatives up to the $s$th order at a point, it is sufficient to know directional derivatives for $s^{m-1}$ lines containing the point whose directional vectors $(1,v_2,\ldots,v_m)$ form a subcube $1\times Q_2\times\dots\times Q_m$ with $Q_i\subset \F_q$, $|Q_i|=p$; 2) every $(m-1)$-uniform hypergraph with $q$ vertices in each part with at least $\epsilon q^{m-1}$ hyperedges contains a copy of $(m-1)$-uniform clique with $s$ vertices in each part (for more details, see Section~\ref{ss::locally correctable codes}).

The availability property yields that lifted multiplicity codes have the best known trade-off between the number of information symbols $n$ and the required redundancy for private information retrieval (PIR) codes and codes with the disjoint repair group property (DRGP). The distinctive property of these codes is that every information (PIR code) or codeword (DRGP code) symbol can be recovered from $k$ disjoint subsets of codeword positions.  More precisely, from our results (for more details, see Section~\ref{ss::PIR codes from LMC}) it follows that given $n$, $m$, and $k=n^\epsilon$, with $0<\epsilon<1-1/m$, the required redundancy of non-binary and binary PIR codes constructed from $m$-variate multiplicity codes is $O(n^{\delta_{LM}(\epsilon,m)})$ and $O(n^{\delta'_{LM}(\epsilon,m)+o(1)})$, respectively, where
\begin{align*}
 \delta_{LM}(\epsilon,m)&:=\frac{m-1}{m} + \frac{1+\log \lambda_m - m}{m-1}\,\epsilon, \\
 \delta'_{LM}(\epsilon,m)&:=\frac{2m-1}{2m} + \frac{1+2\log \lambda_m - 2m}{2m-2}\,\epsilon.
\end{align*}
 We remark that for $m=2$ the same result was first derived in~\cite{li2019lifted}.

\subsection{PIR codes}
Now let us summarize the results for PIR codes, since the best known bounds for DRGP codes hold for PIR codes as well. The defining property of a $k$-PIR code is this: for every message symbol $x_i$, there exist $k$ disjoint sets of coded symbols from which $x_i$ can be uniquely recovered. Although this property is reminiscent of locally recoverable codes~\cite{GHSY12,TB14}, there are important differences. In locally recoverable codes, we wish to guarantee that every message symbol $x_i$ can be recovered from a \emph{small set} of coded symbols, and only one such recovery set is needed. Here, we wish to have \emph{many disjoint recovery sets} for every message symbol, and we do not care about their size.

Formally, this family of codes is defined as follows.

\begin{defn}[PIR code, %
	\cite{fazeli2015pir}]\label{def::PIR code}
	Let $F:\,\Sigma^n\to\Sigma^N$ be a map that encodes a string $x_1,\dots,x_n$ to $c_1,\dots, c_N$ and $\C$ be the image of $F$.
	The code $\C$ will be called a \textit{$k$-PIR code} (or \textit{$[N,n,k]_{|\Sigma|}^{P}$ code}) over the alphabet $\Sigma$ if for every $i\in[n]$, there exist $k$ mutually disjoint sets $R_{1},\dots , R_{k}\subset[N]$ (referred to as \textit{recovering sets}) and functions $g_1,\dots, g_k$ such that for all $\c\in\C$ and for all $j\in[k]$, $g_j(\c|_{R_{j}})=x_{i}$, where $\c|_{R}$ is the projection of $\c$ onto coordinates indexed by $R$.
\end{defn}

The main figure of merit when studying PIR codes is the value of $N$, given $n$ and $k$.  Denote by $N_q(n,k)$ the value of the smallest $N$ such that there exists an $[N,n,k]_q^P$ code. For the binary case, we will remove $q$ from these and subsequent notations. Since it is known that for sublinear $k$ and fixed $q$, $\lim\limits_{n\rightarrow \infty} N_q(n,k)/n=1$,~\cite{fazeli2015pir,guo2013new}, we evaluate these codes by their redundancy and define $r_q(n,k) := N_q(n,k)-n$. It is easy to see that for $k=2$, $r_q(n,2) = 1$, and for any fixed $k\ge 3$, $r_q(n,k) = \Theta(\sqrt{n})$~\cite{fazeli2015pir,rao2016lower,wootters2016linear}. %
In order to have a better understanding of the asymptotic behavior of the redundancy, the value of $r_q(n,k)$ is usually studied for $k=\Theta(n^\epsilon)$, $\epsilon\geq 0$.

The case of fixed $k$ was studied in~\cite{fazeli2015pir,VRK17}. There are several constructions of PIR codes~\cite{li2019lifted,asi2018nearly,FGW17,LC04,VRK17} and based on them, it is already possible to deduce some results on the asymptotic behavior of $r_q(n,k)$. For example, the constructions of \textit{one-step majority logic decodable codes} from~\cite{LC04} assure that $r(n,n^{\epsilon})= O(n^{0.5+\epsilon})$ for all $\epsilon\geq 0$. In~\cite{FGW17} the authors discussed partially lifted codes and their application to non-binary PIR codes. %
More results for PIR codes were achieved in~\cite{asi2018nearly} by using multiplicity codes and array codes. The recent construction~\cite{li2019lifted} of PIR codes is based on bi-variate lifted multiplicity codes. Constructions of PIR codes based on tri-variate lifted RS codes were investigated in~\cite{polyanskii2019lifted}. In Figure~\ref{fig::PIR}, we compare our results to the known results summarized in Lemma~\ref{lem::known1}-\ref{lem::known2}. It can be seen that for $1/2<\epsilon <1$, our bounds improve the state-of-art results.

\begin{lemma}\label{lem::known1}
The redundancy of non-binary PIR codes satisfies:
\begin{enumerate}
\item $r_q(n,k)= \Theta(\sqrt n)$ for fixed $k\ge 3$,~\cite{fazeli2015pir,rao2016lower,wootters2016linear}.
\item $r_q(n,n^{\epsilon})= O(n^{\delta(\epsilon)})$ for $0\leq \epsilon <1 $, where $\delta(\epsilon) = 1-\frac{1}{\lfloor 2/(1-\epsilon)\rfloor}+\frac{\epsilon}{\lfloor 2/(1-\epsilon)\rfloor-1}$,~\cite{asi2018nearly}.
\item $r_q(n,n^{0.25})= O(n^{0.714})$,~\cite{FGW17}.
\item $r_q(n,n^{\epsilon}) = O(n^{\frac{1}{2} + \epsilon (\log 3 - 1)})$ for $0\leq \epsilon < \frac{1}{2}, $~\cite{li2019lifted}.
\item $r_q(n,n^{1-1/m})=O(n^{1+\log \left(1-2^{-m\lceil \log m\rceil}\right)/(m\lceil \log m\rceil)})$ for an integer $m\ge 2$,~\cite{guo2013new}.
\item $r_q(n,n^{2/3})\le O(n^{\log_8(5+\sqrt{5})})$,~\cite{polyanskii2019lifted}.
\item $r_q(n,n^{1-1/m})=O(n^{\frac{\log \lambda_m}{m}})$ for an integer $m\ge 2$,~\cite{holzbaur2020lifted}.
\end{enumerate}
\end{lemma}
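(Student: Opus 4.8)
The plan is to read Lemma~\ref{lem::known1} as a compilation and to supply, for each item, the short derivation that turns a known code family into the claimed PIR redundancy bound. The common engine is a one-line reduction: if $\C$ is an $\F_q$-linear code of block length $N$ over a symbol alphabet $\Sigma$ with ``dimension'' $n$ in which \emph{every} codeword coordinate can be recovered from $k$ pairwise disjoint sets of coordinates (the $k$-disjoint repair group property), then $\C$ is in particular an $[N,n,k]^P_{|\Sigma|}$ PIR code, since the information coordinates form a subset of the codeword coordinates and each of them therefore inherits $k$ disjoint recovering sets. Hence for a target exponent $\delta(\epsilon)$ it suffices to exhibit a code with $N=n(1+o(1))$, the $k$-DRGP with $k=n^{\epsilon}(1+o(1))$, and redundancy $N-n=O(n^{\delta(\epsilon)})$; then $r_q(n,n^{\epsilon})\le N-n$. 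Items~(1)--(3) are then quoted verbatim from~\cite{fazeli2015pir,rao2016lower,wootters2016linear}, \cite{asi2018nearly}, and~\cite{FGW17} respectively, and need no further argument.

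For items~(4)--(7) I would instantiate the engine with the $[m,s,qs-s,q]$ lifted multiplicity code, i.e.\ with degree slack $r=s$, which is exactly the regime in which the availability is known. This code has block length $N=q^m$ over the alphabet $\Sigma=\F_q^{\binom{s+m-1}{m}}$; by the availability part of the main theorem every coordinate is recoverable from $k=\lfloor q/s\rfloor^{m-1}$ pairwise disjoint line-bundles, so it has the $k$-DRGP; and by the rate part its redundancy is $N(1-R)=q^m\cdot O_m\!\left(s^{-1}(q/s)^{\log\lambda_m-m}\right)=O_m\!\left(q^{\log\lambda_m}s^{m-1-\log\lambda_m}\right)$, while $n=R q^m=q^m(1-o(1))$. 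Eliminating $q$ and $s$ in favour of $n$ and $k$ via $q=n^{1/m}(1+o(1))$ and $s=n^{1/m}k^{-1/(m-1)}$, the redundancy exponent collapses to $\tfrac{m-1}{m}-\tfrac{m-1-\log\lambda_m}{m-1}\,\epsilon=\delta_{LM}(\epsilon,m)$ when $k=n^{\epsilon}$, $0\le\epsilon<1-1/m$. Item~(4) is the $m=2$, $\lambda_2=3$ specialization (reproving~\cite{li2019lifted}); item~(7) is the $\epsilon=1-1/m$ endpoint, where $s=1$ and the code degenerates to the $m$-variate lifted RS code of~\cite{holzbaur2020lifted}, with redundancy $O(q^{\log\lambda_m})=O(n^{\log\lambda_m/m})$ and availability $q^{m-1}=n^{1-1/m}$; item~(5) is the same $\epsilon=1-1/m$ endpoint but with the weaker rate estimate of~\cite{guo2013new}, i.e.\ $\log\lambda_m$ replaced by $m+\log(1-2^{-m\lceil\log m\rceil})/\lceil\log m\rceil$; and item~(6) is the $m=3$ lifted RS construction of~\cite{polyanskii2019lifted} taken at $\epsilon=2/3$.

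The bookkeeping is routine once two points are handled carefully, and these are where I would spend the effort. First, the alphabet: for growing $s$ the symbol alphabet $\F_q^{\binom{s+m-1}{m}}$ grows, which is fine for the non-binary statements of Lemma~\ref{lem::known1} but not for binary PIR codes; for the latter one expands each symbol into $\binom{s+m-1}{m}\lceil\log q\rceil$ bits, which scales both $n$ and the redundancy by this factor and is what produces the extra $o(1)$ term and the different value $\delta'_{LM}$. I would therefore carry $(q,s,r)$ symbolically, record the non-binary and binary regimes in parallel, and substitute $k=n^{\epsilon}$ only at the end. Second, one must check that the reduction preserves the parameters: the DRGP is a statement about every codeword coordinate rather than about an information set, so no puncturing or re-encoding is needed, and the disjointness of the line-bundles through a point — which only share the punctured center — follows at once from the subcube structure $1\times Q_2\times\cdots\times Q_m$ used to build them. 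The only genuinely delicate step, and the main obstacle, is confirming that the availability claim (stated for $r=s$) and the rate claim are simultaneously valid for the whole range of $s$ needed to sweep all $\epsilon\in[0,1-1/m)$, namely $s=n^{1/m}k^{-1/(m-1)}$ with $k=n^{\epsilon}$; this forces $s$ as large as $n^{1/m}$ near $\epsilon=0$, so one must verify that the bound $1-O_m(s^{-1}(q/s)^{\log\lambda_m-m})$ stays meaningful (in particular $s<q$) throughout, which is where the hypothesis $\epsilon>0$ and the constants hidden in $O_m(\cdot)$ have to be tracked.
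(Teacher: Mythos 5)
The paper never actually proves Lemma~\ref{lem::known1}: it is stated without a proof environment as a compilation of results already established in the cited works, and the prose around it explicitly reads ``based on them, it is already possible to deduce some results on the asymptotic behavior of $r_q(n,k)$.'' The intended justification of each item is simply the corresponding citation. So your proposal is not a reconstruction of the paper's proof (there is none) but an independent unified re-derivation, and it is worth flagging the places where it is only partially faithful to that role.

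For items (1)--(3) you do the same thing the paper does, namely point to the reference, so there is nothing to compare. For items (4)--(7) you propose a genuinely different route: instead of quoting four separate sources you instantiate one construction -- the $[m,s,qs-s,q]$ lifted multiplicity code with $r=s$, combined with the DRGP-to-PIR reduction and the rate/availability bounds -- and then let the four items emerge as choices of $(m,s)$. The payoff is conceptual unification: item (4) is $m=2$ with $\lambda_2=3$; item (7) is the $s=1$ endpoint with the sharp exponent $\log\lambda_m$; item (5) is the same endpoint with $\log\lambda_m$ replaced by the weaker bound $m+\log(1-2^{-m\lceil\log m\rceil})/\lceil\log m\rceil$; and item (6) is indeed $m=3$, $s=1$, where $\lambda_3=5+\sqrt{5}$ makes $\log\lambda_3/3=\log_8(5+\sqrt 5)$. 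Your elimination of $(q,s)$ in favor of $(n,k)$ and the resulting exponent $\frac{m-1}{m}+\frac{1+\log\lambda_m-m}{m-1}\epsilon$ are correct.

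Two caveats, though. First, this re-derivation depends on Theorem~\ref{th:: code rate of lifted multiplicity codes} and Theorem~\ref{th:: nonbinary disjoint repair group code}, which appear later in the paper than Lemma~\ref{lem::known1}; presented as a proof of Lemma~\ref{lem::known1} it would be logically misordered within the paper, which is exactly why the paper handles the lemma by citation only. Second, items (5)--(7) are all $s=1$ constructions (lifted RS codes), and the lifted multiplicity rate and availability theorems in this paper are stated under the hypothesis $m\le s$, so the $s=1$ endpoint must be obtained from Theorem~\ref{th:: number of bad monomials} and the lifted RS availability argument (one disjoint line per direction), not as a boundary case of the $[m,s,qs-s,q]$ lifted multiplicity family. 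You gesture at this (``the code degenerates to the $m$-variate lifted RS code''), but as written it is not a step the lifted multiplicity theorems actually license; it needs a separate appeal to the $s=1$ results. With that adjustment your unified derivation is sound, and it clarifies the relationship between items (4)--(7) in a way the paper's bare citation list does not.
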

\begin{remark}
For easier comparison, we provide the relevant results for the best known families of non-binary PIR codes in the same form. For $0\le\epsilon\le (m-1)/m$, the required redundancy of $n^\epsilon$-PIR codes based on $m$-variate lifted multiplicity codes, $m$-variate multiplicity codes, and $m$-variate lifted RS codes is $O(n^{\delta_{LM}(\epsilon,m)})$, $O(n^{\delta_{M}(\epsilon,m)})$, and $O(n^{\delta_{LRS}(\epsilon,m)})$, respectively, where $\delta_{LM}(\epsilon,m) := \frac{m-1}{m} + \frac{1+\log \lambda_m - m}{m-1}\epsilon$, $\delta_M(\epsilon,m) := \frac{m-1}{m} + \frac{1}{m-1}\epsilon$ and $\delta_{LRS}(\epsilon,m):=\delta_{LM}(\frac{m-1}{m},m)$.
\end{remark}
\begin{lemma}\label{lem::known2}
The redundancy of binary PIR codes satisfies:
\begin{enumerate}
\item $r(n,k)= \Theta(\sqrt n)$ for fixed $k\ge 3$,~\cite{fazeli2015pir,rao2016lower,wootters2016linear}.
\item $r(n,n^{1-1/m})=O(n^{1+\log \left(1-2^{-m\lceil \log m\rceil}\right)/(m\lceil \log m\rceil)}\log n)$ for an integer $m\ge 2$,~\cite{guo2013new}.
\item $r(n,n^{\epsilon})= O(n^{0.5+\epsilon})$ for $0\leq \epsilon <1/2 $,~\cite{LC04, asi2018nearly}.
\item $r(n,n^{0.25})= O(n^{0.714}\log n)$,~\cite{FGW17}.
\item $r(n,n^{\epsilon})= O(n^{\delta(\epsilon)})$ for $0\leq \epsilon <1 $, where $\delta(\epsilon) =   \min\limits_{m\ge \lceil 1/(1-\epsilon)\rceil}\{1-\frac{m(1-\epsilon)-1}{2m(m-1)}\}$,~\cite{asi2018nearly}.
\item $r(n,n^{\epsilon}) = O(n^{\frac{3}{4} + \epsilon (\log 3 - \frac{3}{2})})$ for $0\leq \epsilon < \frac{1}{2}, $~\cite{li2019lifted}.
\item $r(n,n^{2/3})\le O(n^{\log_8(5+\sqrt{5})}\log n)$,~\cite{polyanskii2019lifted}.
\item $r(n,n^{1-1/m})=O(n^{\frac{\log \lambda_m}{m}}\log n)$ for an integer $m\ge 2$,~\cite{holzbaur2020lifted}.
\end{enumerate}
\end{lemma}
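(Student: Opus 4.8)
Since Lemma~\ref{lem::known2} only records the best bounds currently available in the literature, the plan is essentially bibliographic: trace each of the eight items to its source construction and, for the four items whose underlying code is not binary, supply the standard binary-expansion reduction that accounts for the extra $\log n$ factor. Items~1,~3,~5 and~6 are already stated for $\Sigma=\F_2$ in the cited papers: item~1 is the classical tight estimate for constant availability (upper bound \cite{fazeli2015pir}, matching lower bound \cite{rao2016lower,wootters2016linear}); item~3 is the one-step majority-logic-decodable construction of \cite{LC04}, also recovered in \cite{asi2018nearly}; item~5 is the multiplicity/array-code family of \cite{asi2018nearly}; and item~6 is the binary bi-variate lifted multiplicity construction of \cite{li2019lifted}. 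For these I would simply quote the redundancy exponents verbatim and check that the stated ranges of $\epsilon$ coincide with the ranges in which those constructions are defined.

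For items~2,~4,~7 and~8 the underlying objects --- lifted Reed--Solomon codes over $\F_q$ in \cite{guo2013new,polyanskii2019lifted,holzbaur2020lifted} and partially lifted codes over $\F_q$ in \cite{FGW17} --- are genuinely $q$-ary, which is the source of the $\log n$ factor. The key step is the expansion lemma: fix an $\F_2$-basis of $\F_q=\F_{2^\ell}$ and replace every message and codeword coordinate of an $[N,n,k]_q^P$ code by its length-$\ell$ binary representation; because each recovery function $g_j$ reads a disjoint block of $\F_q$-symbols and outputs one $\F_q$-symbol, the expanded code is an $[N\ell,\,n\ell,\,k]_2^P$ code, so $r(n\ell,k)\le \ell\, r_q(n,k)$. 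In each of these constructions $q=n^{\Theta(1)}$, hence $\ell=\Theta(\log n)$; rewriting the bound in terms of the binary message length $n'=n\ell=\Theta(n\log n)$ leaves the polynomial exponent unchanged (since $\log n'=\Theta(\log n)$) and contributes the $\log n'$ factor. Plugging the non-binary exponents of Lemma~\ref{lem::known1} into this reduction produces items~2,~7 and~8 from items~5,~6 and~7 of Lemma~\ref{lem::known1}, respectively, and item~4 is obtained the same way from \cite{FGW17}.

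\textbf{Main obstacle.} There is no genuinely new mathematics here; the only thing to be careful about is the bookkeeping in the reduction. Specifically, one must verify for each construction that (i) the alphabet size is polynomially, not super-polynomially, bounded in $n$, so that $\ell=O(\log n)$ and not worse, and (ii) that availability is preserved \emph{exactly} --- each $\F_q$-recovery set for a symbol expands coordinate-wise to an $\F_2$-recovery set for each of its $\ell$ bits, and disjointness of the original $k$ sets is inherited --- rather than merely lower-bounding the number of binary recovery sets. Both points hold for all the evaluation-code families cited, so no step beyond what already appears in those papers is required.
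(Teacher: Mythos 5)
Your proposal is correct and matches the paper's treatment: the paper does not prove this lemma beyond citing the sources, and its accompanying remark performs exactly your reduction, expanding each $\F_q$-symbol of the $q$-ary constructions of \cite{guo2013new,FGW17,polyanskii2019lifted,holzbaur2020lifted} into $\log q=\Theta(\log n)$ bits and observing that recovery sets and their disjointness carry over, which yields the extra $\log n$ factor. The only point treated more carefully elsewhere in the paper (Theorem~\ref{th::binary disjoint repair group property code}) is that re-expressing the availability $k=n^{\epsilon}$ in terms of the binary dimension $n'=n\log q$ costs an $o(1)$ (or $\delta$) loss in the availability exponent, a bookkeeping detail your plan glosses over at the same level as the paper's remark.
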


\begin{remark}
The codes constructed in \cite{FGW17,holzbaur2020lifted,guo2013new,polyanskii2019lifted} are $q$-ary codes of length $N=q^m$. To obtain a binary PIR code each symbol can be converted to $\log q = \log N^{\frac{1}{m}} = \frac{1}{m} \log N = \Theta (\log n)$ symbols, hence the additional factor of $\log(n)$ in Lemma~\ref{lem::known2} compared to Lemma~\ref{lem::known1}. Clearly, the image of every recovery set of a $q$-ary symbol is also a recovery set for bit of the image of this symbol, so the number of mutually disjoint recovering sets is at least as large as in for the non-binary code. We provide the relevant results for the best known families of binary PIR codes in the same form. For $0\le\epsilon\le (m-1)/m$, the required redundancy of binary $n^\epsilon$-PIR codes based on $m$-variate lifted multiplicity codes, $m$-variate multiplicity codes, and $m$-variate lifted RS codes is $O(n^{\delta_{LM}(\epsilon)+o(1)})$, $O(n^{\delta'_{M}(\epsilon)+o(1)})$, and $O(n^{\delta'_{LRS}(\epsilon)+o(1)})$, respectively, where $\delta'_{LM}(\epsilon,m) := \frac{2m-1}{2m} + \frac{1+2\log \lambda_m - 2m}{2m-2}\epsilon$, $\delta'_M(\epsilon,m) := \frac{2m-1}{2m} + \frac{1}{2m-2}\epsilon$ and $\delta'_{LRS}(\epsilon,m):=\delta'_{LM}(\frac{m-1}{m},m)$. Therefore, computing the bounds for small $m$ and employing the inequality~\eqref{eq::bounds on top eigenvalue} for large $m$, we can range these three families of binary $n^\epsilon$-PIR codes with  $\epsilon>2/3$ as follows
$$
\min_{m\ge \lceil \frac{1}{1-\epsilon}\rceil}\delta'_{LM}(\epsilon,m)< \min_{m\ge \lceil \frac{1}{1-\epsilon}\rceil}\delta'_{M}(\epsilon,m) < \min_{m\ge \lceil \frac{1}{1-\epsilon}\rceil}\delta'_{LRS}(\epsilon,m).
$$
\end{remark}

\begin{figure}[t]
  \centering
  \begin{tikzpicture}[thick,scale=1]
\pgfplotsset{compat = 1.3}
\begin{axis}[
	legend style={nodes={scale=0.7, transform shape}},
	legend cell align={left},
	width = 0.9\columnwidth,
	height = 0.63\columnwidth,
	xlabel = {$\log_n(k)$},
	xlabel style = {nodes={scale=0.8, transform shape}},
	ylabel = {$\log_n(r_q(n,k))$},
	ylabel style={nodes={scale=0.8, transform shape}},
	xmin = 0,
	xmax = 1,
	ymin = 0.48,
	ymax = 1.03,
	legend pos = south east]

\addplot [loosely dotted, very thick, color=red, mark=none] table[x=eps,y=bin] {upperBoundPIRFile.txt};
\addlegendentry{Upper bound (binary), Lemma~\ref{lem::known2}}%

\addplot[color= blue, mark=none,dotted,very thick] table[x=eps,y=binNew] {upperBoundPIRFile.txt};
\addlegendentry{Upper bound (binary), Theorem~\ref{th::binary disjoint repair group property code}}

\addplot[color=purple, mark=none,loosely dashed,very thick] table[x=eps,y=qary] {upperBoundPIRFile.txt};
\addlegendentry{Upper bound (non-binary), Lemma~\ref{lem::known1}}%

\addplot[color=green, mark=none,dashed, very thick] table[x=eps,y=qaryNew] {upperBoundPIRFile.txt};
\addlegendentry{Upper bound (non-binary), Theorem~\ref{th::asymptotic non-binary disjoint repair group code}}

\addplot[color=black,mark =none,thick] coordinates { (0,0.5) (0.5,0.5) (1,1) };
\addlegendentry{Lower bound~\cite{wootters2016linear,rao2016lower}}
\end{axis}
\end{tikzpicture}
  \caption{Comparison of parameters of binary and non-binary PIR codes based on lifted multiplicity  codes to the upper and lower bounds on the minimal redundancy of \cite{asi2018nearly,li2019lifted,wootters2016linear,rao2016lower,FGW17,polyanskii2019lifted}. For $\log_n(k)\leq 0.5$ the results of Theorem~\ref{th::binary disjoint repair group property code} and Theorem~\ref{th::asymptotic non-binary disjoint repair group code} recover the results from \cite{li2019lifted}.}
  \label{fig::PIR}
\end{figure}
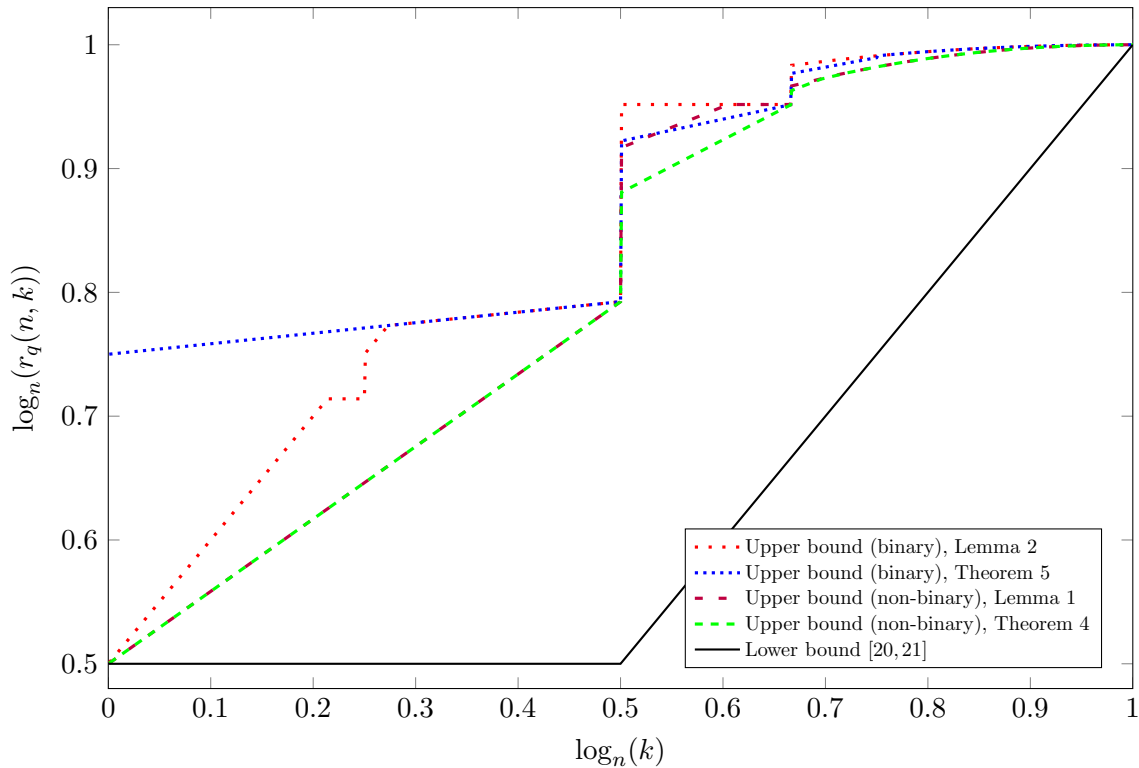

\subsection{Locally correctable codes}

Unlike PIR codes, LCCs \cite{katz2000efficiency} explicitly require locality properties. Informally, a code is said to be locally correctable if given a vector that is sufficiently close to a codeword, each codeword coordinate can be recovered from a small subset of (possibly noisy) other positions with high probability. We give a formal definition of LCCs below.
\begin{defn}[Locally correctable code.] \label{def::LCC}
    A code $\C$ of length $N$ over an alphabet $\Sigma$ is said to be $(r, \delta, \xi)$-locally correctable if there exists a randomized correcting algorithm $\A$ such that
    \begin{enumerate}
        \item For all $\c\in \C$, $i\in [N]$ and all vectors $\y \in \Sigma^N$ such that the relative distance $\Delta(\y, \c)\leq \delta $, we have $\Pr (\A(\y, i)=c_i)\geq 1-\xi$.
        \item $\A$ makes at most $r$ queries to $\y$.
    \end{enumerate}
\end{defn}
LDCs \cite{yekhanin2012locally} are defined similar to LCCs, except that there the algorithm is required to recover message symbols instead of codeword symbols. Note, that for linear codes local correctability is a strictly stronger notion than local decodability, as a systematically encoded LCC is always an LDC.

LCCs have been constructed employing different approaches such as RM codes, lifted RS codes~\cite{guo2013new}, multiplicity codes~\cite{kopparty2014high}, and tensor codes~\cite{ben2006LCC,viderman2015combination}.  One typical question about LCCs is phrased as follows: given the high rate of a code (close to 1), how to get the query complexity as small as possible. The current state-of-the-art construction provided in~\cite{kopparty2017high} has the sub-polynomial (in length) query complexity.
For an extensive discussion about other aspects of LCCs see~\cite{trevisan2004some, yekhanin2012locally, kopparty2017local} and the references therein.

\subsection{Outline}
The remainder of the paper is organized as follows. In Section~\ref{ss::prelimiminaries}, we give rigorous definitions of lifted multiplicity codes and introduce some auxiliary notation.
As the main result, bounds on the rate of lifted multiplicity codes and distance are derived in Section~\ref{ss::lifted mult codes}. In Section~\ref{ss::applications}, we apply these results to PIR codes and LCCs. Finally, we conclude with  Section~\ref{ss::conclusion}.

\section{Preliminaries}\label{ss::prelimiminaries}
We start by introducing some notation that is used throughout the paper.  For some function $f(x)$ and $g(x)$, we write $f(x)=O(g(x))$ and $f(x)=\Omega(g(x))$ as $x\to\infty$ if there exists some real $x_0$ and $C$ such that $|f(x)|\le C|g(x)|$ and $|f(x)|\ge C|g(x)|$ for $x\ge x_0$, respectively. If both equalities $f(x)=O(g(x))$ and $f(x)=\Omega(g(x))$ hold, then we use notation  $f(x)=\Theta(g(x))$. Also, we write $f(x)=o(g(x))$ as $x\to\infty$ if $|f(x)|\le \epsilon(x)|g(x)|$ for some function $\epsilon(x)$ such that $\epsilon(x)\to 0$ as $x\to\infty$. In these notations, we use a subscript, such as $O_m(f(x))$, if the parameter $m$ is supposed to be fixed.

Let $[n]$ be the set of integers from $1$ to $n$. We use uppercase letters such as $T$ and $X$ to denote variables. A vector is denoted by bold letters, e.g., $\d$ is a vector over a field or a ring and $\X$ is a vector of variables.  Let $q=2^{\ell}$ and $\F_q$ be a field of size $q$. We write $\log x$ to denote the logarithm of $x$ in base two. By $\Z_{\ge}$ and $\Z_{n}$  denote the set of non-negative integers and the set of integers from $0$ to $n-1$, respectively. In what follows, we fix $m$ to be a positive integer representing the number of variables. For $\d = (d_1,\dots, d_m)\in \Z_{q}^m$ and $\X=(X_1,\dots,X_m)$, let $\X^\d$ denote the monomial $\prod\limits{_{i=1}^m} X_i^{d_i}$ from $\F_q[\X]$. Let $\deg(\d)$ be the sum of components of $\d\in\Z_{\ge}^n$ and $|\d|$ be the number of non-zero components of $\d$. Additionally, we define $\deg_q(\d)\eqdef \sum_{i=1}^m \lfloor d_i / q \rfloor$. For a vector $\i \in \Z_{\ge}^m$, let $[\X^\i]f(\X)$ denote the coefficient of $\X^\i$ in the polynomial $f(\X)$. For $f(\X)\in\F_q[\X]$, we define $\deg(f)$ to be the maximal $\deg(\i)$ for $\i$ such that $[\X^\i]f(\X)$ is non-zero.

Let us define a partial order relation on $\Z_{q}$. For two integers $a=\sum_{i=0}^{\ell-1} a^{(i)} 2^i$ and $b=\sum_{i=0}^{\ell-1} b^{(i)} 2^i$ with $a^{(i)},b^{(i)}\in\{0,1\}$ we write $a\le_2 b$ 
if $a^{(i)}\le b^{(i)}$ for all $i\in\{0,\dots, \ell-1\}$. We denote $a=(a^{(\ell-1)},...,a^{(0)})_2$.
For vectors $\d, \d'\in \Z_{q}^m$, we write $\d\le_2 \d'$ if $d_i\le_2 d_i'$ for all $i\in[m]$. %

 Abbreviate the set of all lines in $\F_q^m$ by $\L_m\eqdef\left\{\w+\v T:\quad \w,\v\in\F_q^m \right\}$. For an $L=L(T)\in \L_m$ and a $f(\X)\in\F_q[\X]$, we write $f|_{L}$ to denote $f(L(T))$.
\subsection{Lifted multiplicity codes}
\begin{defn}
	For $f(\X)\in \F_q[\X]$ and a vector $\i\in\Z_{\ge}^m$, the $\i$th \textit{(Hasse) derivative} of $f$, denoted by $f^{(\i)}(\X)$, is the coefficient $[\Y^\i]g(\X,\Y)$, where the
	polynomial $g(\X,\Y):= f(\X+\Y)\in \F_q[\X,\Y]$. Therefore, we have
	$$
	g(\X,\Y) = \sum_{\i\in\Z_{\ge}^m} 
	f^{(\i)}(\X)\Y^\i.
	$$
\end{defn}
For an $\x\in \F_q^{m}$, an integer $s\ge1$
and a polynomial $f(\X) \in \F_q[\X]$, we write $f^{(<s)}(\x)\in \F_q^{\binom{s+m-1}{m}}$ to denote the
vector containing $f^{(\i)}(\x)$ for all $\i\in\Z_{\ge}^m$ so that $\deg(\i) < s$. In what follows, we assume that $s$ is a power of two.

We recall two well-known properties on Hasse derivates which will imply the linearity of lifted multiplicity codes over $\F_q$.
\begin{proposition}
	Let $f(\X),g(\X)\in \F_q[\X]$, $\lambda\in \F_q$ and let $\i\in \Z_{\ge}^{m}$. Then we have
	\begin{enumerate}
		\item  $f^{(\i)}(\X) + g^{(\i)}(\X) = (f + g)^{(\i)}(\X).$
		\item $(\lambda f)^{(\i)}(\X)=\lambda f^{(\i)}(\X).$
	\end{enumerate}
\end{proposition}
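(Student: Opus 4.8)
The plan is to reduce both identities to the observation that $f\mapsto f^{(\i)}$ is a composition of two $\F_q$-linear maps. By the defining identity $f(\X+\Y)=\sum_{\j\in\Z_{\ge}^m} f^{(\j)}(\X)\,\Y^{\j}$, we have $f^{(\i)}(\X)=[\Y^{\i}]\,\sigma(f)$, where $\sigma\colon\F_q[\X]\to\F_q[\X,\Y]$ is the substitution map $\sigma(h)(\X,\Y)=h(\X+\Y)$ and $[\Y^{\i}]\colon\F_q[\X,\Y]\to\F_q[\X]$ is the functional extracting the coefficient of $\Y^{\i}$ when an element of $\F_q[\X,\Y]$ is regarded as a polynomial in $\Y$ with coefficients in $\F_q[\X]$. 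It therefore suffices to check that each of these two maps is $\F_q$-linear.

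First I would note that $\sigma$ is the $\F_q$-algebra homomorphism induced by $X_k\mapsto X_k+Y_k$ for $k\in[m]$; in particular $\sigma(f+g)=\sigma(f)+\sigma(g)$ and $\sigma(\lambda f)=\lambda\,\sigma(f)$ for every $\lambda\in\F_q$, and this uses nothing about the field beyond commutativity. Second, writing any $P\in\F_q[\X,\Y]$ uniquely as $P=\sum_{\j\in\Z_{\ge}^m} P_{\j}(\X)\,\Y^{\j}$ with $P_{\j}\in\F_q[\X]$, the assignment $P\mapsto P_{\i}$ is additive and commutes with multiplication by any element of $\F_q[\X]$, hence in particular with scalar multiplication by $\F_q$. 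Composing the two maps gives $(f+g)^{(\i)}=[\Y^{\i}]\sigma(f+g)=[\Y^{\i}]\bigl(\sigma(f)+\sigma(g)\bigr)=f^{(\i)}+g^{(\i)}$ and $(\lambda f)^{(\i)}=[\Y^{\i}]\sigma(\lambda f)=[\Y^{\i}]\bigl(\lambda\,\sigma(f)\bigr)=\lambda\,f^{(\i)}$, which are exactly the two claimed equalities.

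As a self-contained alternative I would instead verify the statement monomial by monomial and extend by linearity: expanding $(\X+\Y)^{\d}=\prod_{k=1}^{m}(X_k+Y_k)^{d_k}$ gives $(\X^{\d})^{(\i)}=\bigl(\prod_{k=1}^{m}\binom{d_k}{i_k}\bigr)\X^{\d-\i}$ (with the convention that the coefficient is $0$ unless $i_k\le d_k$ for all $k$), so on monomials the map $\X^{\d}\mapsto(\X^{\d})^{(\i)}$ agrees with this explicit rule; since $f$ and $g$ are $\F_q$-linear combinations of monomials, additivity and $\F_q$-homogeneity of $f\mapsto f^{(\i)}$ follow from those of coefficient-wise addition and scaling of polynomials. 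There is essentially no hard step here: the only point worth a remark is that these are the \emph{Hasse} derivatives, so no division by factorials occurs and the argument is valid over $\F_q$ of characteristic two exactly as over any commutative ring — the binomial coefficients $\binom{d_k}{i_k}$ are simply read modulo the characteristic, and this plays no role in the linearity. I would keep the final write-up to the two-line homomorphism argument and mention the monomial formula only in passing.
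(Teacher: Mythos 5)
Your proof is correct: both the homomorphism-composition argument and the monomial-by-monomial check are valid, and the remark that Hasse derivatives avoid division by factorials is exactly why the statement holds over $\F_q$ in characteristic two. The paper records this proposition as a well-known property and gives no proof of its own, so your two-line argument via the linearity of $f\mapsto f(\X+\Y)$ followed by coefficient extraction is precisely the standard justification the authors are implicitly relying on.
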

\begin{defn}\label{def::equivalenceOrderP}
	We say that two uni-variate polynomials $f(X),g(X)\in \F_q[X]$ are equivalent up
	to order $s$ if $f^{(<s)}(x) = g^{(<s)}(x)$ for all $x\in\F_q$. To indicate such equivalence, we write $f(X) \equiv_s g(X)$.
\end{defn}
The following statement shows the smallest possible degree of an equivalent polynomial. 
\begin{proposition}[Lemma 12 in~\cite{li2019lifted}]\label{pr::reducing the power}
	Let $q$ be a power of two. For every uni-variate polynomial $f(X)$,
	there exists a unique degree-at-most $sq-1$ polynomial $g(X)$ such that $f(X) \equiv_s g(X)$. Moreover, if $s$ is a power of two, then $f(X) = g(X) \pmod{X^{qs}+X^s}$ and for all $i$ such that $\deg(f) - qs +s < i < qs$, we have $[X^i]f(X)=[X^i]g(X)$.
\end{proposition}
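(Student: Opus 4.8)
The plan is to reduce the whole proposition to the single algebraic fact that $h(X)\equiv_s 0$ if and only if $(X^q-X)^s$ divides $h(X)$; once this is available, everything else is division with remainder together with the Frobenius identity in characteristic two.

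First I would prove the local statement: fix $x\in\F_q$ and expand $h$ in powers of $X-x$, writing $h(X)=\sum_{j\ge 0}c_j(X-x)^j$. Substituting $X\mapsto x$ into the defining identity $h(X+Y)=\sum_i h^{(\i)}(X)Y^i$ (in one variable) yields $h(x+Y)=\sum_j c_j Y^j=\sum_i h^{(i)}(x)Y^i$, so $h^{(i)}(x)=c_i$ for every $i$. Hence $h^{(<s)}(x)=\0$ is equivalent to $c_0=\dots=c_{s-1}=0$, i.e.\ to $(X-x)^s\mid h(X)$. Since the polynomials $(X-x)^s$, $x\in\F_q$, are pairwise coprime and $\prod_{x\in\F_q}(X-x)=X^q-X$, the condition $h\equiv_s 0$ (meaning $h^{(<s)}(x)=\0$ for all $x\in\F_q$) is equivalent to $(X^q-X)^s\mid h(X)$.

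With this in hand, existence and uniqueness follow from Euclidean division by the monic polynomial $(X^q-X)^s$ of degree $qs$: write $f=Q\cdot(X^q-X)^s+g$ with $\deg g\le qs-1$; then $f\equiv_s g$ by the equivalence above, and any other candidate $g'$ with $\deg g'\le qs-1$ would satisfy $(X^q-X)^s\mid(g-g')$, forcing $g=g'$ for degree reasons. For the ``moreover'' part, I would use that in characteristic two with $q$ and $s$ powers of two, iterating the Frobenius identity $(A+B)^2=A^2+B^2$ a total of $\log s$ times gives $(X^q-X)^s=X^{qs}-X^s=X^{qs}+X^s$; therefore the same $g$ is also the remainder of $f$ modulo $X^{qs}+X^s$, which is exactly $f(X)\equiv g(X)\pmod{X^{qs}+X^s}$.

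Finally, for the high-degree coefficients I would analyze $f-g=Q\cdot(X^{qs}+X^s)=QX^{qs}+QX^s$ (assuming $\deg f\ge qs$, the case $\deg f<qs$ being trivial since then $g=f$), where $Q\neq 0$ and $\deg Q=\deg(f)-qs$. For an index $i$ with $\deg(f)-qs+s<i<qs$, the term $QX^{qs}$ contributes $0$ to $[X^i]$ because $i<qs$, and the term $QX^s$ contributes $[X^{i-s}]Q$, which is $0$ because $i-s>\deg(f)-qs=\deg Q$; hence $[X^i](f-g)=0$, i.e.\ $[X^i]f(X)=[X^i]g(X)$. The only step that requires genuine care is the local characterization $h^{(<s)}(x)=\0\iff(X-x)^s\mid h$ via the $(X-x)$-adic expansion — this is precisely the property that makes Hasse derivatives the correct notion in positive characteristic — while the remaining steps are routine bookkeeping with division with remainder.
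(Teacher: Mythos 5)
The paper does not prove this proposition itself; it is imported verbatim as Lemma~12 of the cited reference, so there is no in-paper argument to compare against. Your proof is correct and is the natural one: the local characterization $h^{(<s)}(x)=\0\iff(X-x)^s\mid h$ via the $(X-x)$-adic expansion, combined with pairwise coprimality of the factors of $X^q-X$, gives $h\equiv_s 0\iff(X^q-X)^s\mid h$; Euclidean division by the degree-$qs$ monic $(X^q-X)^s$ then yields existence and uniqueness; the Frobenius identity (applied $\log_2 s$ times in characteristic two) turns $(X^q-X)^s$ into $X^{qs}+X^s$; and the coefficient comparison follows from bookkeeping on $f-g=QX^{qs}+QX^s$ with $\deg Q=\deg f-qs$. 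All steps check out, including the degree bound argument showing $[X^{i-s}]Q=0$ when $i>\deg f-qs+s$. The only remark worth making is that the first half of the statement (existence and uniqueness of $g$) does not actually use that $q$ is a power of two — your argument, correctly, never invokes it there — so the hypothesis is only needed for the ``moreover'' clause.
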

Now we give a well-known result about multiplicities of a multi-variate polynomial.
\begin{lemma}[Follows from~\cite{dvir2013extensions}]\label{lem::multiplicity}
 Let $f(\X)$ be a non-zero polynomial of degree at most $d$. Then the number of points $\x\in\F_q^m$ such that $f^{(\i)}(\x)=0$ for all $\i\in\Z_{\ge}^m$ with $\deg(\i)<s$ is at most $\lfloor d q^{m-1} /s\rfloor$. 
\end{lemma}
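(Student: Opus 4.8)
The plan is to derive this from the \emph{multiplicity Schwartz--Zippel} bound of~\cite{dvir2013extensions}. For a point $\x$ and a non-zero $f$, write $\mathrm{mult}(f,\x)$ for the largest integer $M$ such that $f^{(\i)}(\x)=0$ for every $\i\in\Z_{\ge}^m$ with $\deg(\i)<M$; thus the set in question is $V\eqdef\{\x\in\F_q^m:\ \mathrm{mult}(f,\x)\ge s\}$, and the input I would use is that every non-zero $f$ with $\deg(f)\le d$ satisfies $\sum_{\x\in\F_q^m}\mathrm{mult}(f,\x)\le d\,q^{m-1}$. Granting this, each $\x\in V$ contributes at least $s$ to the left-hand side, so $s\,|V|\le\sum_{\x\in V}\mathrm{mult}(f,\x)\le d\,q^{m-1}$, whence $|V|\le d q^{m-1}/s$; since $|V|$ is an integer this gives $|V|\le\lfloor d q^{m-1}/s\rfloor$, which is exactly the assertion.

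It then remains to recall why the multiplicity Schwartz--Zippel bound holds, which I would prove by induction on $m$. For $m=1$: if $\mathrm{mult}(f,a)=e_a$ then $(X-a)^{e_a}$ divides $f$, and since these factors are pairwise coprime, $\prod_{a\in\F_q}(X-a)^{e_a}$ divides $f$, so $\sum_{a\in\F_q}e_a\le\deg(f)\le d$. For the inductive step, write $f=\sum_{j=0}^{k}f_j(X_1,\dots,X_{m-1})\,X_m^{j}$ with $k=\deg_{X_m}(f)$ and $f_k\not\equiv0$, and restrict to the axis-parallel lines $L_{\x'}=\{\x'\}\times\F_q$. A standard chain-rule property of Hasse derivatives gives $\mathrm{mult}(f,(\x',x_m))\le\mathrm{mult}(f(\x',X_m),x_m)$ whenever $f(\x',X_m)$ is not the zero polynomial; combined with the $m=1$ case this yields $\sum_{x_m\in\F_q}\mathrm{mult}(f,(\x',x_m))\le\deg(f(\x',X_m))\le d$ for each such $\x'$, and summing over the at most $q^{m-1}$ of them contributes at most $d q^{m-1}$.

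The only real difficulty is the set of \emph{base points} $B=\{\x'\in\F_q^{m-1}:\ f(\x',X_m)\equiv0\}$, which can be non-empty over a small field and on which the argument above says nothing. Here I would follow~\cite{dvir2013extensions}: first reduce to the case $\gcd(f_0,\dots,f_k)=1$ by factoring out the common divisor $h(X_1,\dots,X_{m-1})$ of the $X_m$-coefficients, using the multiplicativity $\mathrm{mult}(gh,\x)=\mathrm{mult}(g,\x)+\mathrm{mult}(h,\x)$ together with the inductive hypothesis applied to $h$ in $m-1$ variables; then bound $\mathrm{mult}(f,(\x',x_m))$ for $\x'\in B$ in terms of the Hasse multiplicities of the $f_j$ at $\x'$ and invoke the inductive hypothesis on those polynomials. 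Arranging the bookkeeping in this last step so that the contribution of the base points still fits under $d q^{m-1}$ is the delicate point of the whole argument; the reduction at the start and the $m=1$ base case are routine.
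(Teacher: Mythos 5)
The paper does not prove this lemma at all --- it simply cites~\cite{dvir2013extensions} --- so there is no in-paper proof to compare against. Your first paragraph, deriving the claim from the multiplicity Schwartz--Zippel bound $\sum_{\x\in\F_q^m}\mathrm{mult}(f,\x)\le d\,q^{m-1}$ by noting that each $\x$ in the set in question contributes at least $s$ and then taking the floor by integrality, is exactly the deduction the citation intends and is correct.

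One caution about your supplementary sketch of the Schwartz--Zippel bound itself (which is not actually needed, since the lemma cites it as a black box): the step bounding $\sum_{x_m\in\F_q}\mathrm{mult}(f,(\x',x_m))$ by $\deg(f(\x',X_m))\le d$ for $\x'$ outside the base locus is too lossy. That already saturates the whole budget $d q^{m-1}$ over the non-base slices, with nothing left for the base points, so the bookkeeping cannot be ``arranged'' as you hope from there. The working version bounds $\deg(f(\x',X_m))$ by $\deg_{X_m}(f)=t\le d$, so the non-base slices contribute at most $t q^{m-1}$, and the remaining $(d-t)q^{m-1}$ is charged to the leading coefficient $f_t$ (of degree at most $d-t$ in $m-1$ variables) through the inductive hypothesis, via the single inequality $\mathrm{mult}(f,(\x',x_m))\le\mathrm{mult}(f_t,\x')+\mathrm{mult}(f(\x',X_m),x_m)$; this handles base points automatically and dispenses with the gcd reduction. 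Since you explicitly flag this bookkeeping as the delicate point and are entitled to cite the bound rather than re-prove it, this does not invalidate your proof of the stated lemma.
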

 For a positive integer $d$, denote the set of uni-variate polynomials of degree less than $d$ by
$$
\cF_{d,q}\eqdef\{f(T)\in\F_q[T]:\,\,\deg(f)< d\}.
$$
\begin{defn}[Lifted multiplicity code~\cite{li2019lifted}]\label{def::lifted mult code}
        The  $[m,s,d,q]$ \textit{lifted multiplicity} code over $\F_q^{\binom{s+m-1}{m}}$ of length $q^m$ is defined as
        \begin{equation*}
          \mathcal{C} \eqdef \left\{
            \left.\left(f^{(<s)}(\w)\right)\right|_{\mathbf{w} \in\F_q^m} \ : \
        \begin{aligned}
          &f(\mathbf{X}) \in \F_q[\mathbf{X}]\ \text{such that} \\
          & f|_L \equiv_s g(T) \ \forall \ L=L(T) \in \mathcal{L}_m\\
          &\text{for some} \ g \in \mathcal{F}_{q,d}
\end{aligned} 
\right\} .
        \end{equation*}
\end{defn}
\begin{remark}
Multiplicity codes, as defined in \cite{kopparty2014high}, consist of the evaluations of multi-variate polynomials of degree $<d$. These polynomials trivially fulfill the condition that their restriction to every line $L \in \mathcal{L}_m$ is a polynomial of degree $<d$. It follows that the $[m,s,d,q]$ multiplicity code is a subcode of the $[m,s,d,q]$ lifted multiplicity code and thereby that the dimension of a lifted multiplicity code is lower bounded by the dimension of the corresponding multiplicity code. However, for many parameters, lifting increases the rate of the multiplicity code, as we formally show in Section~\ref{ss::lifted mult codes}. To provide some further intuition, we also give an example for this improvement in Appendix~\ref{ss::improvement LMC vs MC}.
\end{remark}
Define an operation $\Modsp{q}{s}$ that takes a non-negative integer and maps it to the element from $\Z_{qs}$ as follows
$$
a \Modsp{q}{s}\eqdef
\begin{cases}
a,\,&\text{if   } a\in\Z_s,\\
b\in \Z_{qs}\setminus \Z_s,\,&\text{if }a\not\in\Z_s,\,a=b \Mod{qs-s}.
\end{cases}
$$
It can be easily checked that if $a\,\Modsp{q}{s}=b$, then $T^{a} \equiv_s T^{b}$.
\begin{defn}[$(d,s)$-bad and good monomials] \label{def::bad (d^*,s) monomial}
	Given positive integers $s$ and $d$, we say that a monomial $\X^\d$ with $\d\in\Z_{qs}^m$ and $\deg_q(\d)\le s-1$ is \textit{$(d,s)$-bad} over $\F_q[\X]$ if there exists at least one $\i\in\Z_{qs}^m$ such that $\i\le_2\d$ and $\deg(\i) \Modsp{q}{s} \in \{d,d+1,\dots,qs-1\}$. A monomial $\X^\d$ with $\d\in\Z_{qs}^m$ and $\deg_q(\d)\le s-1$  is said to be \textit{$(d,s)$-good} if it is not $(d,s)$-bad.
\end{defn}
Let $\mathcal{F}_{q,s,d}$ be the collection of $(d,s)$-good monomials from $\F_q[\X]$.

\begin{proposition}\label{prop::code cardinality}
For $s\le q$, the cardinality of the $[m,s,d,q]$ lifted multiplicity code is  $\geq q^{|\mathcal{F}_{q,s,d}|}$.
\end{proposition}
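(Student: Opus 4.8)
The plan is to show that the $|\mathcal{F}_{q,s,d}|$ evaluation vectors coming from the $(d,s)$-good monomials are linearly independent over $\F_q$ and all lie in the code, so that the code contains an $\F_q$-subspace of dimension $|\mathcal{F}_{q,s,d}|$ and hence has at least $q^{|\mathcal{F}_{q,s,d}|}$ elements. First I would establish membership: for each $(d,s)$-good $\d \in \Z_{qs}^m$ with $\deg_q(\d) \le s-1$, I need to check that the polynomial $\X^\d$ satisfies the defining condition of Definition~\ref{def::lifted mult code}, i.e.\ that for every line $L = L(T) = \w + \v T$, the restriction $(\X^\d)|_L = \prod_i (w_i + v_i T)^{d_i}$ is $\equiv_s$ to some polynomial of degree $< d$. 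Expanding each $(w_i+v_iT)^{d_i}$ and using that over a field of characteristic two $T^a \equiv_s T^{a \Modsp{q}{s}}$ (the remark before Definition~\ref{def::bad (d^*,s) monomial}), one sees that every monomial $T^k$ appearing with nonzero coefficient in the reduced restriction has exponent $k$ of the form $\deg(\i) \Modsp{q}{s}$ for some $\i \le_2 \d$ (this is where Lucas' theorem / the base-two structure enters: $\binom{d_i}{i_j}$ is nonzero mod $2$ iff $i_j \le_2 d_i$). Since $\X^\d$ is good, none of these exponents lands in $\{d,\dots,qs-1\}$, so the reduced restriction has degree $< d$, as required.

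The second step is linear independence of $\{(\X^\d)^{(<s)}|_{\w}\}$ over the good $\d$'s. Here I would argue that the map sending a polynomial $f$ to its full evaluation-with-derivatives vector $(f^{(<s)}(\w))_{\w}$ is injective on the span of monomials $\X^\d$ with $\d \in \Z_{qs}^m$, $\deg_q(\d) \le s-1$: a nonzero such $f$ has total degree at most $m(qs-1)$, but crucially one uses the condition $d < qs$ together with Lemma~\ref{lem::multiplicity}, or more directly the fact (from the paper's setup, cf.\ the comment that $d<qs$ forces the zero polynomial to be the only one producing the all-zero codeword) that a nonzero linear combination of these specific monomials cannot vanish to order $s$ everywhere. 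Concretely, I would take a putative nontrivial relation $\sum_{\d} c_\d\, (\X^\d)^{(<s)}|_{\w} = 0$, let $f = \sum c_\d \X^\d \ne 0$; then $f^{(<s)}(\w) = 0$ for all $\w \in \F_q^m$, and one shows this is impossible for $f$ supported on monomials with each coordinate exponent $< qs$ and $\deg_q(\d) \le s-1$ — essentially because such monomials form a "reduced" system with respect to $\equiv_s$ on each axis, so $f$ reduces to itself and a reduced nonzero polynomial of this shape cannot be $s$-vanishing on all of $\F_q^m$.

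The main obstacle I anticipate is the linear-independence step, specifically pinning down exactly why monomials $\X^\d$ with $\d \in \Z_{qs}^m$ and $\deg_q(\d) \le s-1$ have distinct, non-cancelling contributions to the derivative-evaluation vector. One clean route: reduce to the uni-variate statement of Proposition~\ref{pr::reducing the power} coordinate by coordinate — it says degree-$\le sq-1$ polynomials are a complete set of representatives for $\equiv_s$, which tensorizes to give that $m$-variate polynomials with each exponent in $\Z_{qs}$ inject into their $(<s)$-derivative evaluations; the extra constraint $\deg_q(\d)\le s-1$ is automatically satisfied by good monomials (indeed it is imposed in Definition~\ref{def::bad (d^*,s) monomial}) and is harmless. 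If a fully self-contained argument is wanted, I would instead invoke that the $[m,s,d,q]$ multiplicity code already has cardinality $q^{\binom{d+m-1}{m}}$ hence its defining evaluation map is injective on degree-$<d$ polynomials, and then bootstrap: the good monomials are not all of degree $< d$, so I genuinely need the reduction argument rather than the multiplicity-code fact alone. I expect the cleanest writeup to (i) cite Proposition~\ref{pr::reducing the power} for injectivity of the evaluation map on the relevant monomial support, (ii) verify membership via the $\Modsp{q}{s}$ reduction and Lucas' theorem, and (iii) conclude $|\C| \ge q^{\dim} \ge q^{|\mathcal{F}_{q,s,d}|}$.
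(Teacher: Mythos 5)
Your membership argument (that good monomials' restrictions to lines reduce via $\Modsp{q}{s}$ and Lucas to degree $<d$) is exactly the paper's argument and is fine. The gap is in the distinction step, and specifically in the route you propose to fill it. You suggest that Proposition~\ref{pr::reducing the power} ``tensorizes'' to give injectivity of the $(<s)$-derivative evaluation map on all $m$-variate polynomials whose per-coordinate exponents lie in $\Z_{qs}$, and that the additional constraint $\deg_q(\d)\le s-1$ is ``harmless.'' This cannot be right: the space of such monomials has dimension $(qs)^m$, while the target space $\bigl(\F_q^{\binom{s+m-1}{m}}\bigr)^{q^m}$ has dimension $q^m\binom{s+m-1}{m}$, and $(qs)^m > q^m\binom{s+m-1}{m}$ for every $s\ge 2$ (already for $m=2$, since $s^2 > s(s+1)/2$). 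So no injection exists on the tensorized monomial space. The constraint $\deg_q(\d)\le s-1$ is precisely what cuts the dimension down to $q^m\binom{s+m-1}{m}$, matching the evaluation space, and it is a genuinely multivariate constraint on $\sum_j \lfloor d_j/q\rfloor$ that does not factor coordinate-by-coordinate --- so the uni-variate uniqueness result does not yield the multivariate injectivity by any tensor argument.

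Your fallback --- invoking ``the fact that $d<qs$ forces the zero polynomial to be the only one producing the all-zero codeword'' --- is not a proof but a pointer; that statement is exactly the distinction property being asserted and is where all the work lies. The paper proves it by a two-parameter induction on $(m,s)$: the base cases are $m=1$ (cited from~\cite{li2019lifted}) and $s=1$ (DeMillo--Lipton--Zippel), and in the inductive step one restricts $X_m=c$ to drop to $(m-1,s)$, concludes $(X_m^q - X_m)\mid f$, writes $f = (X_m^q-X_m)g$ with $g$ a span of type-$(s-1)$ monomials, and uses the Hasse-derivative identity $f^{(\i)} = (X_m^q-X_m)g^{(\i)} - g^{(\j)}$ (with $\j = (i_1,\dots,i_{m-1},i_m-1)$) to reduce to $(m,s-1)$. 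Your writeup needs some argument of this shape; as it stands, the linear-independence step is unproved.
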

\begin{proof}
The full proof of this technical statement is given in Appendix~\ref{ss::injective map}. There we show that different linear combinations of good monomials produce different codewords and that these codewords are contained in the $[m,s,d,q]$ lifted multiplicity code. Thus, the lower bound on the dimension of the code follows direclty from the number of good monomials $|\mathcal{F}_{q,s,d}|$.
\end{proof}
\begin{remark}
Observe that for $s=1$, Definition~\ref{def::lifted mult code} gives exactly the code spanned by the evaluation of good monomials, i.e., the statement of Proposition~\ref{prop::code cardinality} holds with equality. This case corresponds to lifted RS codes, for which this equivalence first appeared in~\cite{guo2013new}. Therefore, the $[m,1,d,q]$ lifted multiplicity code will be called the $[m,d,q]$ lifted RS code in the following. 

In Appendix~\ref{ss::equivalenceLiftedRS}, we provide some codewords of a lifted multiplicity code with $s\geq 2$, which are not included in the subcode spanned by the evaluation of monomials, thereby showing that the statement of Proposition~\ref{prop::code cardinality} does not hold with equality in general. 
\end{remark}

\section{Code rate and distance of lifted multiplicity codes} \label{ss::lifted mult codes}

In this section, as a warm-up, we first recall some known results for lifted RS codes corresponding to the case $s=1$. Then we investigate the code rate and the minimal distance of lifted multiplicity codes. We impose the constraint $s\ge m$ on the parameters, which helps with dropping the modulo operation in the definition of bad monomials. Then by applying the known results for lifted RS codes, we show how to find the asymptotics of the number of bad monomials when $m$ is fixed and $q$ is large. Our estimate continues the study of two-dimensional lifts initiated in~\cite{li2019lifted} and is consistent with the result with the result presented for the case of $m=2$ presented there.  
\subsection{Lifted Reed-Solomon codes}
We now recall a known estimate for the number of $(q-r,1)$-bad monomials when the number of variables is fixed and the alphabet size is large.
\begin{proposition}[Corollary 1 and 2 from \cite{holzbaur2020lifted}] \label{prop:: number of bad * monomials}
	For an integer $r<q=2^{\ell}$, the number of $(q-r,1)$-bad monomials is $\Theta_m\left(r^{m-\log\lambda_m} q^{\log \lambda_m}\right)$ as $\ell\to\infty$,
	where $\lambda_m$ is the largest eigenvalue of the matrix
	$$
	A_m \eqdef\left(\begin{smallmatrix}
	\binom{m}{\ge 1} & \binom{m}{ 0} & 0 & 0 & \dots & 0 \\
	\binom{m}{\ge 3} & \binom{m}{2} &\binom{m}{1} & \binom{m}{0} & \dots & 0 \\
	\vdots & \vdots & \vdots & \vdots & \ddots & \vdots \\
	\binom{m}{\ge 2j+1} & \binom{m}{ 2j} & \binom{m}{ 2j-1} & \binom{m}{ 2j-2} & \dots & \binom{m}{ 2j-m+2}
	\\
	\vdots & \vdots & \vdots & \vdots & \ddots & \vdots \\
	\binom{m}{\ge 2m-1} & \binom{m}{2m-2} & \binom{m}{ 2m-3} & \binom{m}{ 2m-4} & \dots & \binom{m}{ m}
	\end{smallmatrix}\right).
	$$	
Moreover, the number of $\d\in\Z_q^m$ such that there exists an $\i\in\Z_q^m$ with $\i\le_2 \d$ and 
\begin{enumerate}
	\item $\deg(\i) \pmod{q}\in\{q-r,q-r+1,\ldots,q-1\}$ is  $\Theta_m\left(r^{m-\log\lambda_m} q^{\log \lambda_m}\right)$ as $\ell\to\infty$.
	\item $\deg(\i)\in\{q-r,q-r+1,\ldots,q-1\}$ is also $\Theta_m\left(r^{m-\log\lambda_m} q^{\log \lambda_m}\right)$ as $\ell\to\infty$.
\end{enumerate}
\end{proposition}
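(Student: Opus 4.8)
Here is the line of attack I would follow (the quantity itself already appears in \cite{holzbaur2020lifted}). The three counts in the statement coincide up to factors depending only on $m$: the $(q-r,1)$-bad monomials are precisely the vectors in item~1, and since $\deg(\i)\le m(q-1)$, passing between $\deg(\i)\Modsp{q}{1}$, $\deg(\i)\bmod q$ and $\deg(\i)$ itself costs at most deleting a bounded-in-$m$ number of $1$-bits from $\i$ and a polynomial-in-$r$ comparison of nearby windows; so it suffices to count $\d\in\Z_q^m$ for which $D(\d)\coloneqq\{\deg(\i):\i\le_2\d\}$ meets $\{q-r,\dots,q-1\}$.

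Two reductions make this tractable. First, pass to bit-profiles: with $a_t$ the number of coordinates of $\d$ whose $t$-th binary digit is $1$, every $\deg(\i)$ with $\i\le_2\d$ equals $\sum_{t<\ell}c_t2^t$ for some $0\le c_t\le a_t$, so $D(\d)$ depends only on the profile $\a\in\{0,\dots,m\}^{\ell}$, and exactly $\prod_t\binom{m}{a_t}$ vectors $\d$ realize a given $\a$. Second, split the $\ell$ positions into the bottom $k\coloneqq\lceil\log r\rceil$ and the top $n\coloneqq\ell-k$: writing the bottom contribution as $\beta2^k+\rho$ with $\beta\in\{0,\dots,m-1\}$ and $0\le\rho<2^k$, the requirement $\sum_tc_t2^t\in\{q-r,\dots,q-1\}$ forces (as $r\le2^k$) the top contribution to equal $2^{n}-1-\beta$ together with $\rho\in\{2^k-r,\dots,2^k-1\}$. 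The dominant case is $\beta=0$: a constant (in $m$) fraction of bottom profiles make $D(\cdot)$ an interval reaching into $\{2^k-r,\dots,2^k-1\}$, contributing a factor $\Theta_m(2^{mk})=\Theta_m(r^m)$, whereas the $\beta\ge1$ contributions are smaller by an $m$-dependent exponential factor in $k$. Hence the bad count is $\Theta_m\!\big(2^{mk}\,W(n)\big)$, where $W(n)$ is the weighted number of top-profiles realizing some value in the width-$m$ window $\{2^{n}-m,\dots,2^{n}-1\}$.

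It remains to prove $W(n)=\Theta_m(\lambda_m^{n})$ by a top-down transfer-matrix recursion. Removing the top position, reaching $\{j2^{n}-m,\dots,j2^{n}-1\}$ on $n$ positions via $c_{n-1}=c$ is the same as reaching $\{(2j-c)2^{n-1}-m,\dots,(2j-c)2^{n-1}-1\}$ on $n-1$ positions, and only offsets $1\le j\le m$ ever occur since every value stays below $m2^{n}$. Writing $W_j(n)$ for the weighted number of $n$-position profiles reaching the offset-$j$ width-$m$ window (so $W=W_1$), summing over $c\in\{0,\dots,a_{n-1}\}$ and grouping profiles by their largest reachable offset gives
\begin{equation*}
W_j(n)=\binom{m}{\ge 2j-1}\,W_1(n-1)+\sum_{J=2}^{m}\binom{m}{2j-J}\,W_J(n-1),\qquad 1\le j\le m,
\end{equation*}
that is, $(W_1(n),\dots,W_m(n))=A_m\,(W_1(n-1),\dots,W_m(n-1))$. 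Iterating and applying Perron--Frobenius yields $W_1(n)=\Theta_m(\lambda_m^{n})$: $\lambda_m$ is the largest eigenvalue of $A_m$, and although $A_m$ is reducible — the state $j=m$ merely self-loops with weight $\binom mm=1$ — the strongly connected component containing $j=1$ is primitive, carries the eigenvalue $\lambda_m$, and has positive Perron components, so the relevant entry of $A_m^{\,n-n_0}$ grows like $\lambda_m^{n}$. With $2^{\ell}=q$ and $2^{k}=\Theta(r)$ this gives $\Theta_m\!\big(2^{mk}\lambda_m^{n}\big)=\Theta_m\!\big(q^{\log\lambda_m}\,r^{\,m-\log\lambda_m}\big)$.

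The genuinely delicate point is that the recursion closes on an $m$-dimensional vector. A priori the ``type'' of a sub-profile is the whole set of offsets it reaches, any subset of $\{1,\dots,m\}$; one must show the reachable types are down-closed (an offset $j$ reachable implies every offset below $j$ reachable), so that the type is captured by the single number $\max\{j:\a\text{ reaches offset }j\}$ and the transfer matrix is exactly $A_m$. This monotonicity rests on the arithmetic of $D(\a)=\{\sum_tc_t2^t:c_t\le a_t\}$ — one subtracts $2^{n-1}$ from a witness by lowering a high digit by $2$, with a bounded local rearrangement in the residual cases and a separate treatment of profiles whose lowest nonzero digit lies above position $\lceil\log m\rceil$ (there $D(\a)$ is too sparse, but such profiles are rigid enough to count directly). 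The remaining ingredients — controlling the carry $\beta$ from the bottom block and its joint constraint with $\rho$, and checking the displayed form of $A_m$ together with its Perron data — are mechanical, and for $m=2$ the whole computation reduces to the bivariate count of \cite{li2019lifted}.
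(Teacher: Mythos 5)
The paper gives no proof here: Proposition~\ref{prop:: number of bad * monomials} is cited verbatim from~\cite{holzbaur2020lifted} (Corollaries~1 and~2 of that work), so there is nothing in the present source to compare your argument against line by line. Judged on its own terms, your reconstruction is sound and almost certainly mirrors the reference: the reduction to bit-profiles $\a\in\{0,\dots,m\}^{\ell}$ with weight $\prod_t\binom{m}{a_t}$, the split into a bottom block of $k=\lceil\log r\rceil$ positions contributing $\Theta_m(r^m)$ and a top block governed by a transfer matrix, and the recursion
\begin{equation*}
W_j(n)=\binom{m}{\ge 2j-1}\,W_1(n-1)+\sum_{J=2}^m \binom{m}{2j-J}\,W_J(n-1)
\end{equation*}
all check out, and your row indexing reproduces exactly the displayed $A_m$ (with row $j$ starting from $j=1$ matching the paper's row starting from $j=0$). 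Your Perron--Frobenius remark about the reducibility of $A_m$ via the absorbing state $j=m$ (bottom row $(0,\dots,0,1)$) is also correct.

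The one place I would push back is the treatment of the ``down-closedness'' lemma, which you correctly flag as the crux. You present it as requiring a case analysis with ``bounded local rearrangement'' and a separate count of sparse profiles. In fact it admits a clean one-shot proof that also covers the sparse case automatically. The statement you need is: if $v=\sum_t c_t 2^t \ge 2^{n}$ with $0\le c_t\le a_t$, then $v-2^{n}$ is representable with digits $c'_t$ satisfying $0\le c'_t\le c_t$. Process $t$ from $n-1$ down to $0$ carrying a residual $R_{n-1}=2^n$, $R_{t-1}=R_t-c_t2^t$ as long as $c_t2^t<R_t$, and stop the first time $c_t2^t\ge R_t$ (then set $c'_t=c_t-R_t/2^t$, which is a non-negative integer since $R_t$ is divisible by $2^t$ by induction). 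The loop cannot run to $t=-1$ because that would force $\sum_t c_t2^t<2^n$, contradicting $v\ge 2^n$. Applying this with $v=j2^n-\mu$, $1\le\mu\le m$, and $j\ge 2$ (so $v\ge 2\cdot 2^n-m>2^n$ for $n>\log m$) gives $v-2^n=(j-1)2^n-\mu$ in the offset-$(j-1)$ window with the required digit bounds; profiles whose lowest nonzero position exceeds $\lceil\log m\rceil$ are then handled for free, since they reach no window at all and contribute zero to every $W_j$. With that lemma in hand, your grouping by $J_{\max}(\a')$ and the passage between $W_J$ and $u_J=W_J-W_{J+1}$ closes the recursion on an $m$-dimensional state exactly as you describe, and the remaining reduction from $\Modsp{q}{1}$ and $\bmod\ q$ to the genuine window $\{q-r,\dots,q-1\}$ is a finite union over the $m$ possible carries, affecting only the implicit $\Theta_m$-constant.
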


Next we show how to derive the code rate and the minimal distance of lifted RS codes.
\begin{theorem}[{Rate and distance of lifted RS codes, \cite[Theorem 1]{holzbaur2020lifted}}]\label{th:: number of bad monomials}
	For a power of two $q$, the rate $R$ and the relative distance $\delta$ of the $[m,q-r,q]$ lifted RS code are
	$$
	R = 1-\Theta_m\left((q/r)^{\log \lambda_m-m}\right), \quad \delta \ge \frac{r}{q}\quad \text{as } q\to\infty.
	$$
\end{theorem}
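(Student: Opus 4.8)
The plan is to derive both statements from Proposition~\ref{prop:: number of bad * monomials} together with the structural equivalence, recalled in the second Remark after Proposition~\ref{prop::code cardinality}, that for $s=1$ the $[m,q-r,q]$ lifted RS code is exactly the linear span of the evaluations of $(q-r,1)$-good monomials, so that its dimension equals $|\mathcal{F}_{q,1,q-r}|$ and its length is $q^m$.

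\emph{Rate.} First I would observe that the rate is
$R = |\mathcal{F}_{q,1,q-r}| / q^m = 1 - (\#\{(q-r,1)\text{-bad monomials}\})/q^m$,
since the monomials $\X^\d$ under consideration range over all $\d \in \Z_q^m$ (note $\deg_q(\d)=0$ automatically when $s=1$), and there are exactly $q^m$ of them. By Proposition~\ref{prop:: number of bad * monomials}, the number of bad monomials is $\Theta_m(r^{m-\log\lambda_m} q^{\log\lambda_m})$, so
$1 - R = \Theta_m\!\left(r^{m-\log\lambda_m} q^{\log\lambda_m}/q^m\right) = \Theta_m\!\left((q/r)^{\log\lambda_m - m}\right)$,
which is the claimed formula. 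Since $\log\lambda_m < m$ (as noted in the introduction), this tends to $0$ as $q\to\infty$, confirming $R\to 1$.

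\emph{Distance.} For the relative distance I would use the availability/repair-group argument sketched in the introduction: take a nonzero codeword, coming from a polynomial $f$ producing a good monomial combination, and a coordinate $\w$ where $f^{(<1)}(\w)=f(\w)\neq 0$ (equivalently, where the codeword symbol is nonzero). I want to lower bound the number of nonzero coordinates. Consider the $q^{m-1}$ lines through $\w$ (one per direction, lines grouped in parallel classes / through a point). On each such line $L$, the restriction $f|_L$ is equivalent to — in fact for $s=1$ equal, after reduction, to — a univariate polynomial $g_L$ of degree $< q-r$; and $g_L(\text{value at }\w) = f(\w) \neq 0$, so $g_L$ is a nonzero polynomial of degree $<q-r$, hence vanishes on at most $q-r-1$ of the $q$ points of $L$, i.e. is nonzero on at least $r+1$ points including $\w$. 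Thus on each of the $q^{m-1}$ lines through $\w$ there are at least $r$ further points where the codeword is nonzero; these other points can be double-counted across lines, but a standard counting (each point $\neq\w$ lies on exactly one line through $\w$) shows the total number of nonzero coordinates is at least $1 + r \cdot q^{m-1}/1$... more carefully, partition $\F_q^m\setminus\{\w\}$ into the $q^{m-1}$ lines-minus-$\w$ (each of size $q-1$, but they are not disjoint — all lines through $\w$ share only $\w$), so the nonzero coordinates number at least $r\,q^{m-1} + 1$ (at least $r$ per direction, directions disjoint off $\w$). Hence the relative distance is at least $(r\,q^{m-1}+1)/q^m > r/q$.

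\emph{Main obstacle.} The rate computation is essentially a one-line consequence of the cited Proposition once the $s=1$ structural equivalence is invoked, so the only real content is the distance bound, and there the subtlety — flagged explicitly by the authors for the multiplicity case — is that one must correctly reduce $f|_L$ to a degree-$<q-r$ representative and verify that the representative is genuinely nonzero at $\w$ (i.e. that passing to the equivalent low-degree polynomial does not change the evaluation at the point we care about); for $s=1$ this is immediate since equivalence up to order $1$ is just equality of evaluations, so $g_L(\w) = f|_L(\w) = f(\w)$. The counting that converts "$\geq r$ nonzero points per line through $\w$" into "$\geq r q^{m-1}$ nonzero coordinates" just uses that distinct lines through $\w$ meet only at $\w$. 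I expect no essential difficulty beyond carefully bookkeeping these line counts.
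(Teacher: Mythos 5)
Your proposal matches the paper's own proof in essence: the rate is obtained exactly as you describe, by invoking the $s=1$ equality case of Proposition~\ref{prop::code cardinality} and plugging Proposition~\ref{prop:: number of bad * monomials} into the fraction of $(q-r,1)$-good monomials among all $\d\in\Z_q^m$; and the distance is obtained exactly as you describe, by fixing $\w_0$ with $f(\w_0)\neq 0$ and observing that on each of the $q^{m-1}$ lines $\w_0 + \v T$ with $\v=(1,v_2,\dots,v_m)$ the restriction is a nonzero univariate polynomial of degree $\le q-r-1$, giving at least $r$ further nonzero points per line and hence $\ge 1 + rq^{m-1}$ nonzero coordinates in total. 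The one word of caution is that your phrase about ``partitioning $\F_q^m\setminus\{\w\}$ into $q^{m-1}$ lines'' is not literally a partition (those lines minus $\w$ cover only $q^{m-1}(q-1)$ of the $q^m-1$ remaining points), but since you only need that the lines are pairwise disjoint away from $\w$, the lower bound you extract is exactly the paper's.
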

\begin{remark}
It is clear that for any $\epsilon>0$, there exist some real $c>0$ such that for $r=cq$, the rate $R\ge 1-\epsilon$ and the relative distance $\delta\ge c$. Also, it can be seen that rate approaches $1$ for $r=o(q)$ as $\lambda_m<2^m$. These facts were also proved in~\cite{guo2013new} in order to show the existence of high rate high error locally correctable codes and high rate LCCs with sublinear locality. Let us illustrate the improvement of Theorem~\ref{th:: number of bad monomials} compared to the result from~\cite{guo2013new}. We take $r=O(1)$ and see that the convergence rate of our estimate is $1-\Theta_{m,r}\left(q^{\log \lambda_m - m}\right)$. 
The arguments from~\cite{guo2013new} show that for $m\ge 2$, the rate is
$$
1- O_{m,r}\left(\left(1-2^{-m\lceil\log m\rceil}\right)^{\log q/\lceil\log m\rceil}\right)= 1-O_{m,r}(q^{-p_{m}}),
$$
where $p_m\eqdef -\log\left(1-2^{-m\lceil\log m\rceil}\right)/\lceil\log m\rceil$. In Table~\ref{tab::eigenvalues}, we depict some values of $m-\log \lambda_m$ and $p_m$ for $2\le m\le 9$.
\end{remark} 
\begin{proof}%
	To estimate the code rate of $[m,q-r,q]$ lifted RS codes, it suffices to compute the fraction of $(q-r,1)$-good monomials. By Proposition~\ref{prop:: number of bad * monomials}, the rate is
	$$
	1 - \Theta_m\left(r^{m-\log\lambda_m} q^{\log \lambda_m}\right)q^{-m}=1-\Theta_m\left((q/r)^{\log \lambda_m-m}\right)
	$$
	as $q\to\infty$. To estimate the relative distance of the code, we first note that the lifted RS code is linear. Suppose that $(f(\a))|_{a\in\F_q^m}$ is a non-zero codeword. Let us say that $f(\w_0)\neq 0$. Then for any $\v\in \F_q^m\setminus \{\0\}$, the polynomial $f(\w_0+\v T)$ is equivalent to a non-zero uni-variate polynomial of degree at most $q-r-1$. Thus, $f(\w_0+\v t)\neq 0$ for at least $r+1$ different values $t\in\F_q$ and $f(\a)$ is non-zero for at least  $1+rq^{m-1}$ values $\a\in\F_q^m$.  This completes the proof.
\end{proof}
\subsection{Computing  the number of $(qs-r,s)$-bad monomials} \label{ss::computing the number of bad monomials}
In this section, we show that the number of $(qs-r,s)$-bad monomials can be well approximated  by ``$\binom{s+m}{m-1}$  times the number of $(q-r,1)$-bad monomials''.  

Let $s\ge m$ be a power of two and $1\le r< q$. First, we show that for such a choice of parameters, the modulo operation in Definition~\ref{def::bad (d^*,s) monomial} can be dropped. By Proposition~\ref{pr::reducing the power}, for $f(X)\in\F_q[X]$ with $\deg(f)\le (s-1)q + m(q-1)=(m+s-1)q-m$, we have that $[X^i](f(X) \pmod{X^{qs}+X^{s}})=[X^i]f(X)$ for all $i\in \{qs-r, qs-r+1,\ldots, qs-1\}$ as $(m+s-1)q-m-qs+s=(m-1)q - m +s < qs - r$. Therefore, by Definition~\ref{def::bad (d^*,s) monomial}, a monomial $\X^{\d}$ with $\d\in\Z_{qs}^m$ and $\deg_q(\d)\le s-1$ is  $(qs-r,s)$-bad if there exists a vector $\i$ such that $\i\le_2 \d$ and $\deg(\i)\in \{qs-r,qs-r+1,\ldots, qs-1\}$.

Let a monomial $\X^{\d}$ be $(qs-r,s)$-bad. Then every component of $\d$ can be represented as $d_j = \hat d_j q+ d_j'$ with $d_j'\in \Z_q$ and $\hat d_j \in \Z_s$ for all $j\in[m]$. As deduced above, there exists an $\i\in\Z_{qs}^m$ such that $\i\le_2 \d$ and $\deg(\i)\in \{qs-r,qs-r+1,\ldots, qs-1\}$. Therefore, after representing $i_j=\hat i_j q + i_j'$, we obtain that $\i'\le_2 \d'$ and $\deg(\i') \pmod{q} \in  \{q-r,q-r+1,\ldots,q-1\}$.%
Let us also check that $s-m\le \deg(\hat \d)\le s-1$. To show $\deg(\hat \d)\ge s-m$, we just note that 
$$
\deg(\i)\le \deg(\d)= \deg(\hat \d)q + \deg(\d')\le \deg(\hat \d)q + (q-1)m.
$$
Thus, if  $\deg(\hat \d)< s-m$, we have that $\deg(\i)\le (s-1)q - m < qs - r$ which contradicts the property  $\deg(\i)\in \{qs-r,qs-r+1,\ldots, qs-1\}$. Note that $\deg(\hat \d)=\deg_q(\d)$, therefore $\deg(\hat \d)\le s-1$.
Finally, we arrive at the following statement.
\begin{corollary}\label{cor:: number of bad (qs-r,s) monomials}
	For an integer $m<r<q=2^{\ell}$ and a power of two $s\ge m$, the number of $(qs-r, s)$-bad monomials is 
	$$
	\Theta_m\left(s^{m-1}r^{m-\log \lambda_m} q^{\log \lambda_m}\right)\quad \text{as }\ell\to\infty.
	$$
\end{corollary}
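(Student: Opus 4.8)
The plan is to establish a bijection-like correspondence between $(qs-r,s)$-bad monomials and pairs consisting of a $(q-r,1)$-bad monomial together with an auxiliary vector recording the "high-order" parts of the exponents. The preceding discussion already does most of the structural work: for the given parameter range ($s \ge m$ a power of two, $m < r < q$), the modulo operation drops out, so a monomial $\X^{\d}$ with $\d \in \Z_{qs}^m$ and $\deg_q(\d) \le s-1$ is $(qs-r,s)$-bad precisely when there is $\i \le_2 \d$ with $\deg(\i) \in \{qs-r,\dots,qs-1\}$. First I would make the splitting $d_j = \hat d_j q + d_j'$ with $\hat d_j \in \Z_s$, $d_j' \in \Z_q$ canonical, and likewise $i_j = \hat i_j q + i_j'$. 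The condition $\i \le_2 \d$ decomposes (since $q$ and $s$ are powers of two, the binary digits of $\hat{} \cdot q$ and of the $\Z_q$-part occupy disjoint bit positions) into $\hat\i \le_2 \hat\d$ and $\i' \le_2 \d'$. From the excerpt's computation, badness of $\X^\d$ forces $\deg(\i') \pmod q \in \{q-r,\dots,q-1\}$, i.e. $\X^{\d'}$ is $(q-r,1)$-bad in the sense of item~1 of Proposition~\ref{prop:: number of bad * monomials}, and simultaneously $s-m \le \deg(\hat\d) \le s-1$.

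Conversely, I would argue that given any $\d' \in \Z_q^m$ admitting $\i' \le_2 \d'$ with $\deg(\i') \pmod q \in \{q-r,\dots,q-1\}$, and any $\hat\d \in \Z_s^m$ with $\deg(\hat\d) \in \{s-m,\dots,s-1\}$, the monomial $\X^\d$ with $d_j = \hat d_j q + d_j'$ is $(qs-r,s)$-bad: one can choose $\hat\i \le_2 \hat\d$ with $\deg(\hat\i)$ equal to the appropriate value so that $\deg(\hat\i)q + \deg(\i') = \deg(\hat\i q + \i')$ lands in $\{qs-r,\dots,qs-1\}$ — this is where I would need to check that the available flexibility in $\deg(\hat\i)$ (ranging over a window of size roughly $\deg(\hat\d)$, hence at least $s-m \ge$ a constant, actually at least comparable to $q/r \cdot$ something) suffices to hit a length-$r$ target interval after scaling by $q$; since $r < q$, hitting $\{qs-r,\dots,qs-1\}$ requires picking $\deg(\hat\i)$ within an interval of length $\ge 1$, which is fine, but one must also verify $\deg(\i')$ can be taken in $\{q-r,\dots,q-1\}$ exactly (not just mod $q$) by possibly decreasing some $i_j'$ — this uses $\le_2$-downward closure of the relevant digit patterns. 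This correspondence is not quite a bijection onto a clean product set, but it sandwiches the count: the number of $(qs-r,s)$-bad monomials is at most $|\{\hat\d \in \Z_s^m : \deg(\hat\d) \le s-1\}|$ times the number of $(q-r,1)$-bad $\d'$, and at least the number of valid $\hat\d$ (those with $\deg(\hat\d) \ge s-m$) times (a constant fraction of) the same count.

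To finish, I would estimate the combinatorial factor: $|\{\hat\d \in \Z_s^m : \deg(\hat\d) \le s-1\}| = \binom{s-1+m}{m}$ and $|\{\hat\d \in \Z_s^m : s-m \le \deg(\hat\d) \le s-1\}|$ differ by lower-order terms, and both are $\Theta_m(s^m)$. Wait — I should be careful: the claimed exponent on $s$ in the Corollary is $m-1$, not $m$. The resolution is that the constraint is $\deg_q(\d) = \deg(\hat\d) \le s-1$ together with the fact that among $(qs-r,s)$-bad monomials one of the coordinates is essentially pinned by the degree condition interacting with $\i$; more precisely, the count of $\hat\d$ consistent with a given $\i'$-pattern and the target interval is $\Theta_m(s^{m-1})$ because fixing $\deg(\hat\i)$ to lie in a length-$O(1)$ window (to hit the scaled target) removes one degree of freedom relative to the free choice of $\hat\d \in \Z_s^m$ — equivalently, the relevant count is $\sum_{j} |\{\hat\d : \deg(\hat\d)=s-1-j,\ j = O(r/q)\}| = \Theta_m(s^{m-1})$. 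Combining this $\Theta_m(s^{m-1})$ factor with $\Theta_m(r^{m-\log\lambda_m} q^{\log\lambda_m})$ from Proposition~\ref{prop:: number of bad * monomials} gives the stated $\Theta_m(s^{m-1} r^{m-\log\lambda_m} q^{\log\lambda_m})$.

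The main obstacle I anticipate is pinning down exactly why the $s$-exponent is $m-1$ rather than $m$ — that is, correctly accounting for how the degree window $\{qs-r,\dots,qs-1\}$, being of length $r < q < qs$, constrains $\deg(\hat\d)$ (via $\deg(\hat\i) \le \deg(\hat\d)$ and $\deg(\hat\i)q + \deg(\i')$ lying in that window) to essentially one value up to $O(1)$ choices, thereby costing one power of $s$. Making the upper and lower bounds match to within constants depending only on $m$ — rather than losing or gaining a factor of $s$ — is the delicate part; the rest is the routine digit-bookkeeping already sketched in the paragraphs preceding the Corollary.
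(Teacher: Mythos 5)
Your decomposition and correspondence to $(q-r,1)$-bad monomials is exactly the route the paper takes: drop the modulo, split $d_j = \hat d_j q + d_j'$, observe that $\le_2$ factors across the disjoint bit blocks, deduce that $\d'$ must be $(q-r,1)$-bad (mod $q$) while $s-m \le \deg(\hat\d) \le s-1$, and reverse this for a lower bound. But the very point you flag as delicate --- why the $s$-exponent is $m-1$ --- is where you make a genuine arithmetic error that your subsequent backtracking does not repair.

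You assert that $|\{\hat\d \in \Z_s^m : \deg(\hat\d) \le s-1\}| = \binom{s-1+m}{m}$ and $|\{\hat\d \in \Z_s^m : s-m \le \deg(\hat\d) \le s-1\}|$ ``differ by lower-order terms, and both are $\Theta_m(s^m)$.'' This is false: the second count equals $\sum_{j=1}^{m}\binom{s-j+m-1}{m-1}$, a sum of $m$ terms each $\Theta_m(s^{m-1})$, hence $\Theta_m(s^{m-1})$ --- a full power of $s$ below the first. Once you multiply by the $O_m\left(r^{m-\log\lambda_m}q^{\log\lambda_m}\right)$ choices of $\d'$ from Proposition~\ref{prop:: number of bad * monomials}, this elementary count \emph{is} the paper's upper bound. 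For the lower bound the paper simply fixes $\deg(\hat\d)=s-1$ (still $\binom{s+m-2}{m-1}=\Theta_m(s^{m-1})$ choices of $\hat\d$) and invokes item~(2) of the same proposition, which already gives $\Omega_m\left(r^{m-\log\lambda_m}q^{\log\lambda_m}\right)$ vectors $\d'$ admitting $\i'\le_2\d'$ with $\deg(\i')$ in $\{q-r,\ldots,q-1\}$ on the nose --- so your worry about needing to ``decrease some $i_j'$'' to leave residue-land never arises. Finally, your attempted fix via a constraint ``$j = O(r/q)$'' is not coherent (since $r<q$ this would pin $j$ near $0$) and misattributes the source of the $s^{m-1}$: the index $j$ ranges over $\{1,\ldots,m\}$, a set of size $m=O_m(1)$ that is independent of $r$ and $q$. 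The power of $s$ is lost because $\deg(\hat\d)$ is confined to $m$ consecutive values, not because of any interaction with the window length $r$.
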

\begin{proof}%
	As noted above, for every $(qs-r,s)$-bad monomial $\X^\d$, $\d$ can be uniquely decomposed to the pair $(\hat\d,\d')$, where $s-m\le \deg(\hat\d) \le s-1$ and for $\d'\in\Z_q^{m}$, there exists an $\i'\le_2 \d'$ with $\deg(\i') \pmod{q} \in  \{q-r,q-r+1,\ldots,q-1\}$. Thus, Proposition~\ref{prop:: number of bad * monomials} yields that the number of $(qs-r,s)$-bad monomials for $\ell\to\infty$ can be bounded by
	$$
	\left(\sum_{j=1}^{m}\binom{s-j+m-1}{m-1}\right) O_m\left(r^{m-\log \lambda_m} q^{\log \lambda_m}\right)  	=O_m \left(s^{m-1}r^{m-\log \lambda_m} q^{\log \lambda_m}\right).
	$$
	It remains to show that this estimate is asymptotically tight.
	To see this, consider all possible $\d'\in\Z_q^m$ such that there exists $\i'\in\Z_q^m$ with $\i'\le_2 \d'$ and $\deg(\i')=q-r'\in\{q-r,q-r+1,\ldots,q-1\}$. By Proposition~\ref{prop:: number of bad * monomials} the number of such $\d'$ can be estimated as
	$$
	\Omega_m\left(r^{m-\log \lambda_m} q^{\log \lambda_m}\right).
	$$
 Now we take a look on all possible $\hat\d\in \Z_s^m$ such that $\deg(\hat \d) = s-1$. We can estimate the number of such $\hat\d$ by 
	$\binom{s+m-2}{m-1}$. For any such $
	\hat \d$, we define $\d\in\Z_{qs}^m$ to be such that $d_j=\hat d_jq + d_j'$ and note that $\X^\d$ is $(qs-r,s)$-bad as for $\i$ with $i_j = \hat d_jq+i_j'$, we have $\i\le_2 \d$ and 
	$$
	\deg(\i)=q\deg(\hat \d) + \deg(\i') = q(s-1) + q-r' = qs - r',
$$
	which belongs to $\{qs-r,qs-r+1,\ldots, qs-1\}$. Therefore, the number of $(qs-r,s)$-bad monomials is 
	$$
	\binom{s+m-2}{m-1}\Omega_m\left(r^{m-\log \lambda_m} q^{\log \lambda_m}\right)
	= \Omega_m \left(s^{m-1}r^{m-\log \lambda_m} q^{\log \lambda_m}\right).
	$$
	This completes the proof.
\end{proof}
\subsection{Rate and distance of lifted multiplicity codes}\label{ss::rate and distance of lifted multiplicity codes}

\begin{theorem}[Rate and distance of lifted multiplicity codes]\label{th:: code rate of lifted multiplicity codes}
	For powers of two $s,q$ and integers $r$ and $m$ with $m\le s \le q$ and $r\le q$, the rate of the $[m,s,qs-r,q]$ lifted multiplicity code is
	$$
1 - O_m\left(s^{-1}(q/r)^{\log \lambda_m - m}\right)\quad \text{as } q\to\infty.
	$$
	The relative distance $\Delta$ of the $[m,s,qs-r,q]$ lifted multiplicity code is
$$
\Delta\geq \Delta_{min}:=\left \lceil\frac{r-s+1}{s}\right\rceil\frac{q-s}{q^2}.
	$$
	For $s=o(q)$,  $\Delta_{min}= \frac{r}{qs}(1+o(1))$.
\end{theorem}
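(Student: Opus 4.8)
The plan is to prove the rate and the distance claims independently.

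\textbf{Rate.} By Proposition~\ref{prop::code cardinality} the $[m,s,qs-r,q]$ lifted multiplicity code has at least $q^{|\mathcal{F}_{q,s,qs-r}|}$ codewords over the alphabet $\F_q^{\binom{s+m-1}{m}}$, so its rate is at least $|\mathcal{F}_{q,s,qs-r}|\big/\big(q^m\binom{s+m-1}{m}\big)$ and trivially at most $1$. The total number of monomials $\X^\d$ with $\d\in\Z_{qs}^m$ and $\deg_q(\d)\le s-1$ is exactly $\binom{s+m-1}{m}q^m$: writing $d_j=\hat d_j q+d_j'$ with $\hat d_j\in\Z_s$, $d_j'\in\Z_q$, the constraint reads $\sum_j\hat d_j\le s-1$, which has $\binom{s-1+m}{m}$ solutions, each completed in $q^m$ ways by the low parts. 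Hence the fraction of good monomials equals $1$ minus (number of $(qs-r,s)$-bad monomials)$\big/\big(\binom{s+m-1}{m}q^m\big)$. By Corollary~\ref{cor:: number of bad (qs-r,s) monomials} the numerator is $\Theta_m\!\big(s^{m-1}r^{m-\log\lambda_m}q^{\log\lambda_m}\big)$, while $\binom{s+m-1}{m}=\Theta_m(s^{m})$, so the fraction of bad monomials is $\Theta_m\!\big(s^{-1}r^{m-\log\lambda_m}q^{\log\lambda_m-m}\big)=\Theta_m\!\big(s^{-1}(q/r)^{\log\lambda_m-m}\big)$, which gives the stated rate. (For the edge ranges $r\le m$ or $r=q$ not covered by Corollary~\ref{cor:: number of bad (qs-r,s) monomials} one uses monotonicity of the bad-monomial count in $r$ to keep the same $O_m$ bound.)

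\textbf{Distance, setup.} The code is $\F_q$-linear: the evaluation map $f\mapsto(f^{(<s)}(\w))_{\w\in\F_q^m}$ is $\F_q$-linear by the proposition on Hasse derivatives, and $\equiv_s$ is preserved by $\F_q$-linear combinations without increasing the degree, so it suffices to lower bound the Hamming weight of an arbitrary non-zero codeword $\c=(f^{(<s)}(\w))_{\w\in\F_q^m}$, where $f$ satisfies $f|_L\equiv_s g_L$ with $\deg g_L<qs-r$ for every $L\in\L_m$. Fix $\w_0$ with $f^{(<s)}(\w_0)\neq\0$ and set $k_0:=\min\{\deg(\i):f^{(\i)}(\w_0)\neq0\}<s$. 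Expanding $f(\w_0+\v T)=\sum_{\i}f^{(\i)}(\w_0)\v^{\i}T^{\deg(\i)}$ shows that the coefficient of $T^{k_0}$ in $f|_{\w_0+\v T}$ is $P(\v):=\sum_{\deg(\i)=k_0}f^{(\i)}(\w_0)\v^{\i}$, a non-zero homogeneous polynomial of degree $k_0<s\le q$ (hence not the zero function on $\F_q^m$, since each variable has degree $<q$ in it).

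\textbf{Distance, count.} For every direction $\v$ with $P(\v)\neq0$ the restriction $g:=f|_{\w_0+\v T}$ has $g^{(k_0)}(0)=P(\v)\neq0$ with $k_0<s$, so $g\not\equiv_s0$; combined with $g\equiv_s h$ for some non-zero $h$ with $\deg h\le qs-r-1$, Lemma~\ref{lem::multiplicity} (case $m=1$) shows that $g^{(<s)}(x)=0$ for at most $\lfloor(qs-r-1)/s\rfloor=q-\lceil(r+1)/s\rceil$ values $x\in\F_q$, i.e.\ this line carries at least $\lceil(r+1)/s\rceil$ non-zero coordinates of $\c$, one being $\w_0$. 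By the Schwartz--Zippel bound, $P$ vanishes on at most $k_0q^{m-1}\le(s-1)q^{m-1}$ points, so the number of one-dimensional directions $\v$ with $P(\v)\neq0$ is at least $\frac{q^m-(s-1)q^{m-1}}{q-1}=\frac{q^{m-1}(q-s+1)}{q-1}$. Distinct lines through $\w_0$ meet only at $\w_0$, so, counting $\w_0$ once and at least $\lceil(r+1)/s\rceil-1=\lceil(r-s+1)/s\rceil$ further non-zero coordinates on each such line,
\[
\mathrm{wt}(\c)\ \ge\ 1+\frac{q^{m-1}(q-s+1)}{q-1}\left\lceil\frac{r-s+1}{s}\right\rceil\ \ge\ 1+(q-s)q^{m-2}\left\lceil\frac{r-s+1}{s}\right\rceil,
\]
the last step using $q(q-s+1)\ge(q-1)(q-s)$, i.e.\ $2q\ge s$. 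Dividing by the length $q^m$ gives $\Delta\ge\Delta_{min}$, and when $s=o(q)$ a direct estimate of $\frac{q-s}{q^2}=\frac1q(1+o(1))$ and $\lceil(r-s+1)/s\rceil=\frac rs(1+o(1))$ yields $\Delta_{min}=\frac r{qs}(1+o(1))$. The main obstacle, and the point where earlier work erred, is exactly that $f^{(<s)}(\w_0)\neq\0$ does \emph{not} imply $f|_{\w_0+\v T}\not\equiv_s0$ for all $\v$: one must isolate the lowest-order non-vanishing Hasse derivative at $\w_0$, use that the associated homogeneous form $P$ is non-zero, invoke Schwartz--Zippel to control the ``bad'' directions, and combine over lines without double-counting $\w_0$.
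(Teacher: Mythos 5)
Your proof follows essentially the same route as the paper's: the rate is lower-bounded by the fraction of good monomials via Proposition~\ref{prop::code cardinality} and Corollary~\ref{cor:: number of bad (qs-r,s) monomials}, and the distance is bounded by picking a point $\w_0$ with a non-vanishing Hasse derivative, applying a Schwartz--Zippel argument to the homogeneous form $\sum_{\deg(\i)=i_0} f^{(\i)}(\w_0)\v^{\i}$ to show that almost all lines through $\w_0$ carry a non-trivial restriction, and then invoking Lemma~\ref{lem::multiplicity} on each such line. The only cosmetic differences are that you normalize directions by dividing the total Schwartz--Zippel count by $q-1$ rather than fixing $v_1=1$, you (unnecessarily but harmlessly) take the \emph{minimal} weight $k_0$ rather than an arbitrary non-vanishing $\i_0$, and you explicitly address the range $r\le m$ not literally covered by Corollary~\ref{cor:: number of bad (qs-r,s) monomials} via monotonicity — a small gap the paper leaves implicit.
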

\begin{proof}[Proof of Theorem~\ref{th:: code rate of lifted multiplicity codes}]
	By Proposition~\ref{prop::code cardinality}, we can obtain the lower bound on the rate of the lifted multiplicity  code by computing the fraction of $(qs-r,s)$-good monomials. Thus, by Corollary~\ref{cor:: number of bad (qs-r,s) monomials}, the rate is 
	$$
	1- \frac{O_m\left(s^{m-1}r^{m-\log \lambda_m} q^{\log \lambda_m}\right)}{\binom{s+m-1}{m}q^m} = 1 - O_m\left(s^{-1}(q/r)^{\log \lambda_m - m}\right).
	$$

	Now we estimate the distance of the $[m,s,qs-r,q]$ lifted multiplicity code. Consider a codeword which is the evaluation of some non-zero polynomial $f$.
	Let $\w_0\in\F_q^m$ be a coordinate such that $f^{(<s)}(\w_0)$ is not all-zero. In what follows, we prove the existence of a set $S$, $|S|\geq (q-s)q^{m-1}$, of lines containing this point such that for any $L\in S$ polynomial $f|_L$  doesn't vanish for at least $\lceil r/s\rceil$ points.  More explicitly, assume that for some $\i_0\in\Z_{\ge}^m$ with $\deg( \i_0)=i_0 < p$, $f^{(\i_0)}(\w_0)\neq 0$. Let a line $L$ be parameterized by $\w_0+T\v$ with $\v=(1,v_2,\ldots,v_m)$, $v_i\in\F_q$. Define $g_\v(T):=f|_L=f(\w_0+T\v)$. By the definition of Hasse derivatives, we have
	$$
	g_\v(T) = \sum_{\i\in\Z_{\ge}^m}f^{(\i)}(\w_0+T\v) T^{\deg(\i)} \v^\i
	$$
	and, thus,
	$$
		g_{\v}^{(i_0)}(0)=\sum\limits_{\i :\ \deg(\i)=i_0}f^{(\i)}(\w_0) \v^{\i}.
	$$
	Since $f^{(\i_0)}(\w_0)\neq 0$, we can think about the right-hand side of the above equality as a non-zero polynomial in $v_2,\ldots,v_m$ of degree at most $s$. This yields that there exist at most $s q^{m-2}$ different $\v=(1,v_2,\ldots, v_m)\in\F_q^m$ such that $g_\v^{(i_0)}(0)=0$. Thus, for at least $(q-s)q^{m-2}$ different lines $L$ containing the point $\w_0$, the uni-variate polynomial $f|_L\neq 0$. By the definition of $[m,s,qs-r,q]$ lifted multiplicity codes, for any line $L$, $f|_L$ agrees with some uni-variate polynomial of degree at most $qs-r-1$ on its first $s-1$ derivatives. By Lemma~\ref{lem::multiplicity}, if $g_{\v}(T)=f|_L\neq 0$, there exist at least $\lceil (r+1)/s\rceil$ points on which $f|_L$  doesn't vanish with high multiplicity, i.e., for at least $\lceil (r+1)/s\rceil$ different $t\in\F_q$, $g_\v^{(j)}(t)\neq 0$ for some $j<s$. This implies that the number of non-zero positions of the codeword produced by $f$ is at least 
	$$
	1+\left\lceil \frac{r+1}{s}-1\right\rceil (q-s)q^{m-2}.
	$$
	Since the lifted multiplicity code is $\F_q$-linear, the distance of the lifted multiplicity code can be bounded by the same value. This completes the proof.
\end{proof}

\section{Applications}\label{ss::applications}
\subsection{PIR codes from lifted multiplicity codes}\label{ss::PIR codes from LMC}
In the previous sections, we derived bounds on the rate of lifted multiplicity codes, which we use in this section to obtain new upper bounds on the required redundancy of PIR codes (cf. Definition~\ref{def::PIR code}).
Our results improve the constructions of these codes based on ordinary multiplicity codes~\cite{asi2018nearly}.
Note that the definition of a \textit{code with the disjoint repair group property} (DRGP)~\cite{li2019lifted} is similar to Definition~\ref{def::PIR code}, except that we should recover all codeword symbols instead of only information symbols. For $\F_q$-linear codes, as in our case, any systematically encoded code with the DGRP directly gives a PIR code. The codes constructed from lifted multiplicity codes in the following have the DGRP property, but as the focus here are PIR codes, we state the results for this code class.

First let us recall a known result for recovering the evaluation $f^{(<s)}(\w_0)$ for an arbitrary polynomial. 

\begin{lemma}[Follows from~{\cite[Theorem~14]{asi2018nearly}}]\label{lem::good direction and multivariate polynomial}
	 Let $f(\X)\in\F_q[\X]$ and a line $L$ be parameterized as $\w_0+T\v$. Define $g_\v(T):=f|_L=f(\w_0+T\v)$. Let a family of sets $Q_2, \ldots, Q_{m}$, $Q_i\subset \F_q$, $|Q_i|=s$, be given. If for all directions of the form $\v=(1, v_2, \ldots, v_{m})$, $v_i\in Q_i$, and all $0\leq j< s$, values $g_{\v}^{(j)}(0)$ are known,  then it is possible to reconstruct $f^{(<s)}(\w_0)$.
\end{lemma}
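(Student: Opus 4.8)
The plan is to recover $f^{(<s)}(\w_0)$ by a two-stage argument: first reconstruct, for each fixed direction $\v=(1,v_2,\ldots,v_m)$ in the grid $1\times Q_2\times\cdots\times Q_m$, the full list of directional derivatives $g_\v^{(j)}(0)$ for $0\le j<s$ — but this is exactly what is \emph{given} — and then solve a linear system expressing these directional derivatives in terms of the sought Hasse derivatives $f^{(\i)}(\w_0)$. The key identity, obtained by expanding $g_\v(T)=f(\w_0+T\v)=\sum_{\i\in\Z_{\ge}^m} f^{(\i)}(\w_0+T\v)\,T^{\deg(\i)}\v^\i$ and reading off the coefficient of $T^{j}$ at $T=0$, is
$$
g_\v^{(j)}(0)=\sum_{\i:\ \deg(\i)=j} f^{(\i)}(\w_0)\,\v^{\i},
$$
so that for each degree level $j\in\{0,1,\ldots,s-1\}$ separately we get a linear system in the unknowns $\{f^{(\i)}(\w_0):\deg(\i)=j\}$ with coefficient matrix whose rows are indexed by directions $\v$ and whose entries are the monomial evaluations $\v^\i=\prod_{t=2}^m v_t^{i_t}$ (the first coordinate of $\v$ being $1$, the exponent $i_1$ plays no role). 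Since $\i$ ranges over vectors with $\deg(\i)=j\le s-1$, each $i_t\le s-1$, so the relevant monomials in $(v_2,\ldots,v_m)$ all have individual degrees at most $s-1$.

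The main step is therefore to show this coefficient matrix has full column rank, i.e., that evaluation at the grid points $Q_2\times\cdots\times Q_{m}$ (each $|Q_t|=s$) distinguishes all monomials $\prod_{t=2}^m v_t^{i_t}$ with $0\le i_t\le s-1$. This is a standard multivariate-interpolation fact: the evaluation map from the space of polynomials in $v_2,\ldots,v_m$ of individual degree $<s$ in each variable to $\F_q^{Q_2\times\cdots\times Q_m}$ is a bijection, because it factors as a tensor product of the one-variable Vandermonde-type isomorphisms $\F_q[v_t]_{<s}\to\F_q^{Q_t}$, each of which is invertible since $|Q_t|=s$ distinct points determine a polynomial of degree $<s$. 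Concretely, given the values $g_\v^{(j)}(0)$ over the whole grid, one first interpolates out $v_2$ (fixing the other coordinates), then $v_3$, and so on, to extract each coefficient $f^{(\i)}(\w_0)$ with $\deg(\i)=j$. Doing this for every $j$ from $0$ to $s-1$ recovers all of $f^{(<s)}(\w_0)$.

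One subtlety to address is that the same Hasse derivative $f^{(\i)}(\w_0)$ for a given multi-index $\i$ with $\deg(\i)=j$ is what we want, but the identity above only determines the \emph{combination} $\sum_{\i:\deg(\i)=j} f^{(\i)}(\w_0)\v^\i$; the full-rank argument is what promotes "we know all these combinations over the grid" to "we know each individual coefficient," and here it is essential that the number of monomials of degree exactly $j$ in $v_2,\ldots,v_m$ never exceeds the number of available directions $s^{m-1}$ — which holds since all such monomials lie inside the degree-$(<s)$-in-each-variable box whose evaluation grid is exactly $Q_2\times\cdots\times Q_m$. I expect the only real work is setting up this tensor-product invertibility cleanly; everything else is bookkeeping with the Hasse-derivative expansion. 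Note also that $f^{(\i)}(\w_0)$ for $\i$ involving the first coordinate is recovered as well, since $\deg(\i)=j$ already pins down $i_1=j-\sum_{t\ge2}i_t$ once $(i_2,\ldots,i_m)$ is known, so there is no ambiguity from the missing first coordinate in $\v$.
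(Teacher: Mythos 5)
The paper itself does not prove this lemma—it is cited verbatim from \cite{asi2018nearly}—so there is no in-paper proof to compare against, but your argument is the expected one and it is correct. You derive $g_\v^{(j)}(0)=\sum_{\i:\deg(\i)=j}f^{(\i)}(\w_0)\v^\i$ from the Hasse expansion, observe that for each fixed level $j$ the monomials $v_2^{i_2}\cdots v_m^{i_m}$ that occur are pairwise distinct and have individual degree $\le j\le s-1$, and then invert the evaluation map on $Q_2\times\cdots\times Q_m$ as a tensor product of $s\times s$ Vandermonde matrices; the constraint $i_1=j-\sum_{t\ge 2}i_t$ then recovers the full multi-index, so all of $f^{(<s)}(\w_0)$ is reconstructed level by level.
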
 

Next we prove that lifted multiplicity codes satisfy the definition of $k$-PIR codes for appropriate $k$. 
\begin{theorem}[Lifted multiplicity codes are PIR codes]\label{th:: nonbinary disjoint repair group code}
	Fix an integer $m\ge 2$ and powers of two $q$ and $s$ with $m\le s \le q$. The $[m,s,qs-s,q]$ lifted multiplicity code is a $k$-PIR code for $k=(q/s)^{m-1}$.
\end{theorem}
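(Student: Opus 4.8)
The plan is to exhibit, for each message coordinate, the $k = (q/s)^{m-1}$ disjoint recovering sets explicitly using the line structure of the code together with Lemma~\ref{lem::good direction and multivariate polynomial}. First I would fix the systematic encoding: since the $[m,s,qs-s,q]$ lifted multiplicity code is $\F_q$-linear (by the two properties of Hasse derivatives in the Proposition after Definition~\ref{def::equivalenceOrderP}) and the map from good-monomial coefficients to codewords is injective (Proposition~\ref{prop::code cardinality}), we may encode so that each message symbol $x_i$ equals some fixed entry of $f^{(<s)}(\w_0)$ for a corresponding evaluation point $\w_0 \in \F_q^m$. Thus it suffices to show that the whole vector $f^{(<s)}(\w_0)$ can be recovered from $k = (q/s)^{m-1}$ mutually disjoint subsets of codeword coordinates, and then each of these serves as a recovering set for $x_i$ (via the appropriate coordinate function $g_j$).

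The key step is the partition of directions. Write each nonzero "affine" direction in the normalized form $\v = (1, v_2, \ldots, v_m)$ with $v_i \in \F_q$; there are $q^{m-1}$ of them. Partition each copy of $\F_q$ in coordinates $2, \ldots, m$ into $q/s$ blocks of size $s$, say $\F_q = \bigdotcup_{a=1}^{q/s} Q_i^{(a)}$ with $|Q_i^{(a)}| = s$ (here $s \mid q$ since both are powers of two and $s \le q$). For each tuple $\a = (a_2, \ldots, a_m) \in [q/s]^{m-1}$ define the subcube of directions $\mathcal{V}_\a = \{1\} \times Q_2^{(a_2)} \times \cdots \times Q_m^{(a_m)}$, so the $\mathcal{V}_\a$ partition all normalized directions into $(q/s)^{m-1}$ classes, each of size $s^{m-1}$. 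For a fixed $\a$, Lemma~\ref{lem::good direction and multivariate polynomial} (applied with $Q_i = Q_i^{(a_i)}$) says that knowing $g_\v^{(j)}(0)$ for all $\v \in \mathcal{V}_\a$ and all $0 \le j < s$ suffices to reconstruct $f^{(<s)}(\w_0)$. But each value $g_\v^{(j)}(0)$, and more generally the entire restriction $f|_L \equiv_s g_\v(T)$ with its derivatives up to order $s-1$, is a linear function of the codeword entries $f^{(<s)}(\w_0 + t\v)$ for $t \in \F_q \setminus \{0\}$ (expanding $f(\w_0 + t\v + \Y)$ and collecting Hasse coefficients; the derivatives of $g_\v$ are $\F_q$-linear combinations of the multivariate Hasse derivatives of $f$ at points on $L$). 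Hence the recovering set $R_\a$ is the set of codeword coordinates indexed by $\{\w_0 + t\v : t \in \F_q \setminus \{0\},\ \v \in \mathcal{V}_\a\}$.

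It remains to check disjointness and to confirm these are valid recovering sets. Two points $\w_0 + t\v$ and $\w_0 + t'\v'$ with $t, t' \ne 0$ coincide iff $t\v = t'\v'$; since the first coordinate of every normalized $\v$ is $1$, this forces $t = t'$ and then $\v = \v'$, so the sets $\{\w_0 + t\v : t \ne 0\}$ for distinct normalized $\v$ are disjoint, and a fortiori the $R_\a$ for distinct $\a$ are disjoint (they use disjoint direction classes). Each $R_\a$ excludes $\w_0$ itself, as required for a nontrivial recovering set. Validity is exactly Lemma~\ref{lem::good direction and multivariate polynomial} combined with the linear-algebra observation above, and crucially the code's defining property guarantees $f|_L \equiv_s g_\v(T)$ for a genuine degree-$(<qs-s)$ polynomial $g_\v$, so the relevant derivative data is well-defined and consistent. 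I expect the main technical point — though not a deep obstacle — to be carefully justifying that the order-$s$ derivative data $g_\v^{(<s)}(0)$ (and indeed all of $g_\v|_{T \ne 0}$) is determined by, and is an $\F_q$-linear function of, the codeword symbols $f^{(<s)}(\w_0 + t\v)$; this follows from the chain-rule-type expansion of Hasse derivatives along a line but should be stated cleanly, perhaps as a short lemma, since Lemma~\ref{lem::good direction and multivariate polynomial} is phrased in terms of $g_\v^{(j)}(0)$ rather than codeword coordinates directly.
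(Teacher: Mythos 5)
Your overall architecture is the same as the paper's: normalize directions to $\v=(1,v_2,\ldots,v_m)$, partition them into $(q/s)^{m-1}$ subcubes $\{1\}\times Q_2^{(a_2)}\times\cdots\times Q_m^{(a_m)}$ of size $s^{m-1}$, invoke Lemma~\ref{lem::good direction and multivariate polynomial} on each subcube, and check disjointness via the normalized first coordinate. That all matches.

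However, there is a genuine gap in the step where you get $g_\v^{(j)}(0)$ from the codeword. The chain-rule expansion $g_\v^{(j)}(t)=\sum_{\deg(\i)=j} f^{(\i)}(\w_0+t\v)\,\v^\i$ only recovers $g_\v^{(j)}(t)$ at the \emph{other} points $t\neq 0$ of the line; it does not by itself produce $g_\v^{(j)}(0)$, which lives at the very coordinate $\w_0$ you are trying to reconstruct and hence is not allowed in the recovering set. Passing from $\{g_\v^{(j)}(t):\ t\neq 0,\ j<s\}$ to $g_\v^{(j)}(0)$ requires a separate interpolation argument: by the code's defining property, $g_\v\equiv_s h$ for a unique $h$ with $\deg h\le qs-s-1$ (Proposition~\ref{pr::reducing the power}), and the $(q-1)s=qs-s$ observed derivative values at the $q-1$ nonzero points determine $h$ uniquely by Hermite interpolation, whence $g_\v^{(j)}(0)=h^{(j)}(0)$. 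This is exactly the step the paper handles explicitly, and it is where the specific choice $d=qs-s$ is decisive: for larger $d$ the count of unknowns in $h$ would exceed the $(q-1)s$ available constraints and the line could not be interpolated from its punctured evaluations. Your proposal asserts the right conclusion but mis-attributes it entirely to the chain rule; patching it requires adding the Hermite-interpolation argument (and noting the degree bound that makes it work).
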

\begin{proof}%
For any line $L$ parameterized by $\w_0+T\v$ and a polynomial $f$ producing a codeword of the $[m,s,qs-s,q]$ lifted multiplicity code, the polynomial $g_\v(T):=f|_L$ is equivalent up to order to $s$ to a uni-variate polynomial $h(T)$ of degree at most $sq-s-1$. By reading $g_\v^{(j)}(t)$ for all $0\le j<s$, $t\in\F_q\setminus\{0\}$, we can reconstruct polynomial $h(T)$ in $O(qs\log(qs))$ time (cf.~\cite{chin1976generalized}) and get the values $h^{(j)}(0)=g_\v^{(j)}(0)$ for all $0\le j<s$.

For an integer $i\in [q/s]$, let $Q_i$ be a subset of $\F_q$ of size $s$ so that $Q_i\cap Q_j=\emptyset$ for $j\neq i$. Let us index codeword symbols by elements of $\F_q^m$, i.e., $(c_1,\ldots,c_{q^m})=(c_{\w})|_{\w\in \F_q^m}$, where $c_{\w}=f^{(<s)}(\w)$. Fix an arbitrary vector $(i_2,\ldots, i_{m})\in [q/s]^{m-1}$. By Lemma~\ref{lem::good direction and multivariate polynomial}, for $\w_0\in\F_q^m$, a possible recovering set for $c_{\w_0}$ is simply 
	$$
	\left\{\w_0+\v t:\, t\in \F_q\setminus\{0\},\,v_1=1,\,v_j\in Q_{i_j}\text{ for }j\in[m]\setminus\{1\}\right\}.
	$$
	 Thus, for $c_{\w_0}$, we can construct at least $(q/s)^{m-1}$ mutually disjoint recovering sets.
\end{proof} 
\begin{theorem}[Non-binary PIR codes]\label{th::asymptotic non-binary disjoint repair group code}
	Given an integer $m\ge 2$, for any real $\varepsilon$ with $0<\varepsilon <\frac{m-1}{m}$ and a power of two $q$, there exists an  $n^\varepsilon$-PIR code of length $N=q^m$ and dimension $n$ over $\Sigma$
	such that the redundancy, $N-n$,   and the alphabet size, $|\Sigma|$, satisfy
	$$
	N-n = O_m\left(n^{(m-1)/m+(\log \lambda_m - m + 1)\epsilon /(m-1)}\right),\quad |\Sigma|=	q^{\Theta_m(q^{m-\epsilon m^2/(m-1)})}.
	$$
	In other words, for $0<\epsilon<1$, the polynomial growth of the minimal redundancy of $n^\epsilon$-PIR codes with dimension $n$ is
	$$
	\log_n\left(r_{|\Sigma|}(n,n^\epsilon)\right)\le \min_{m \ge \lceil 1/ (1-\epsilon) \rceil}\left( \frac{m-1}{m} + \frac{1+\log \lambda_m - m}{m-1}\epsilon \right).
	$$ 
\end{theorem}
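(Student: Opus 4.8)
The plan is to instantiate the $[m,s,qs-s,q]$ lifted multiplicity code $\C$ for a suitable power of two $s$ and to read the PIR parameters off Theorems~\ref{th:: nonbinary disjoint repair group code} and~\ref{th:: code rate of lifted multiplicity codes}. Theorem~\ref{th:: nonbinary disjoint repair group code} already gives that $\C$ is a $k$-PIR code with $k=(q/s)^{m-1}$ once a systematic encoding is fixed — legitimate because $\C$ is $\F_q$-linear and the recovering sets of Theorem~\ref{th:: nonbinary disjoint repair group code} work for \emph{every} codeword coordinate, not only for information coordinates. The length is $N=q^m$, the alphabet is $\Sigma=\F_q^{\binom{s+m-1}{m}}$ (so $|\Sigma|=q^{\binom{s+m-1}{m}}$), and the number of information symbols $n$ satisfies $N-n=N(1-R)(1+o(1))$, with $R$ the rate from Theorem~\ref{th:: code rate of lifted multiplicity codes} specialised to $r=s$.

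I would then take $s$ to be the largest power of two with $s\le c_m\,q^{(m-1-m\epsilon)/(m-1)}$ for a small constant $c_m$, so that $q/s\asymp q^{m\epsilon/(m-1)}$. Since $0<\epsilon<(m-1)/m$ the exponent $(m-1-m\epsilon)/(m-1)$ is a positive constant, hence $s\to\infty$ and $m\le s\le q$ for all large $q$, which is exactly what Theorems~\ref{th:: nonbinary disjoint repair group code} and~\ref{th:: code rate of lifted multiplicity codes} require. Because $\log\lambda_m<m$ one has $(q/s)^{\log\lambda_m-m}\le 1$, so the rate error term is $O_m(s^{-1})=o(1)$; thus $R=1-o(1)$, $n=\Theta_m(q^m)$, and $k=(q/s)^{m-1}=\Theta_m(q^{m\epsilon})=\Theta_m(n^\epsilon)$, with $c_m$ picked so that in fact $k\ge n^\epsilon$ outright. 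Substituting $s\asymp q^{1-m\epsilon/(m-1)}$ and $q/s\asymp q^{m\epsilon/(m-1)}$ into $N-n=q^m\,O_m\!\big(s^{-1}(q/s)^{\log\lambda_m-m}\big)$ gives $N-n=O_m\!\big(q^{\,m-1+\frac{m\epsilon}{m-1}(1+\log\lambda_m-m)}\big)$, and since $q=\Theta_m(n^{1/m})$ this equals $O_m\!\big(n^{(m-1)/m+\frac{1+\log\lambda_m-m}{m-1}\epsilon}\big)=O_m(n^{\delta_{LM}(\epsilon,m)})$. Finally $\binom{s+m-1}{m}=\Theta_m(s^m)=\Theta_m\!\big(q^{\,m-m^2\epsilon/(m-1)}\big)$ yields the claimed $|\Sigma|$.

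For the closing ``in other words'' statement I would note that the construction applies to every integer $m$ with $\epsilon<(m-1)/m$, i.e.\ to all $m\ge\lceil 1/(1-\epsilon)\rceil$ apart from the measure-zero boundary $1/(1-\epsilon)\in\Z$ with $m=1/(1-\epsilon)$, which is precisely the lifted Reed--Solomon endpoint $\delta_{LM}((m-1)/m,m)=\log\lambda_m/m$ already covered by Theorem~\ref{th:: number of bad monomials}. Each admissible $m$ gives $\log_n\!\big(r_{|\Sigma|}(n,n^\epsilon)\big)\le\delta_{LM}(\epsilon,m)+o(1)$, and letting $q\to\infty$ and taking the best such $m$ produces the stated minimum.

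All the conceptual weight sits in the two theorems already proved, so the rest is bookkeeping; I expect the one genuinely delicate point to be the passage in the first paragraph from the $\F_q$-linear DRGP code over the large alphabet $\F_q^{\binom{s+m-1}{m}}$ to a systematically encoded $\Sigma$-ary PIR code — one must check that a block-aligned information set of size $n=\dim_{\F_q}(\C)/\binom{s+m-1}{m}+o(N)$ exists, so that the redundancy is $N(1-R)(1+o(1))$ and is \emph{not} inflated by a factor of $\binom{s+m-1}{m}$ (which would spoil the exponent $\delta_{LM}(\epsilon,m)$). A secondary care is rounding $s$ to a power of two in the direction that preserves $k\ge n^\epsilon$ while perturbing the redundancy exponent only by an absorbable $o(1)$.
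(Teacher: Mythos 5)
Your proposal follows exactly the same route as the paper's own proof: pick $s=\Theta_m\bigl(q^{1-\epsilon m/(m-1)}\bigr)$ (a power of two), invoke Theorem~\ref{th:: nonbinary disjoint repair group code} for the availability $k=(q/s)^{m-1}$ and Theorem~\ref{th:: code rate of lifted multiplicity codes} for the rate, then substitute. Your exponent bookkeeping for $N-n$ and $|\Sigma|$ matches the paper, and your treatment of the ``in other words'' clause (each $m\ge\lceil 1/(1-\epsilon)\rceil$ gives a valid instantiation, with the integer boundary $m=1/(1-\epsilon)$ reducing to the lifted RS endpoint) is a reasonable gloss on what the paper leaves implicit. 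The two points you flag at the end — the existence of a $\Sigma$-block-aligned (near-)information set so that the DRGP code genuinely yields a systematically encoded PIR code without an extra $\binom{s+m-1}{m}$ penalty on the redundancy, and the power-of-two rounding of $s$ — are precisely the steps the paper also passes over (it simply states ``for simplicity of notation, we assume that $s$ is a power of two'' and asserts systematic encoding in the discussion preceding Theorem~\ref{th:: nonbinary disjoint repair group code}), so you are being at least as careful as the source, not introducing a new gap.
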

\begin{proof}%
Take $s=\Theta_m(q^{1 - \epsilon m/(m-1)})$. For simplicity of notation, we assume that $s$ is a power of two. By Theorem~\ref{th:: nonbinary disjoint repair group code}, there exists a $k$-PIR code with $k=(q/s)^{m-1}=\Theta_m(N^\varepsilon) = \Theta_m(n^{\epsilon})$ over $\F_q^{\binom{s+m-1}{m}}$ of length $N=q^m$ and redundancy at most
	\begin{align*}
	N-n&=O_m\left(q^m s^{-1}(q/s)^{\log\lambda_m - m}\right)\\
	&=O_m\left(q^{\epsilon m/(m-1)+(m-1)}q^{\epsilon m/(m-1) (\log\lambda_m - m)}\right)\\
	&=O_m\left(n^{(m-1)/m+(\log \lambda_m - m + 1)\epsilon /(m-1)}\right) .
	\end{align*}
\end{proof}

We now transform the non-binary codes constructed in Theorem~\ref{th::asymptotic non-binary disjoint repair group code} into binary PIR codes.
\begin{theorem}[Binary PIR codes]\label{th::binary disjoint repair group property code}
	Given a positive integer $m$, for any real $\varepsilon$ with $0<  \varepsilon <\frac{m-1}{m}$, any real $\delta > 0$ and an integer $n$ sufficiently large,
	there exists a binary $n^{\varepsilon-\delta}$-PIR code of length $N$ and dimension $n$
	such that the redundancy, $N-n$, satisfies
	$$
	N-n = O_m\left(n^{(m-1/2)/m + \epsilon (1/2 + \log \lambda_m - m)/(m-1) }\right).
	$$
		In other words, for $0<\epsilon<1$, the polynomial growth of the minimal redundancy of binary $n^\epsilon$-PIR codes with dimension $n$ is
	$$
	\log_n\left(r(n,n^\epsilon)\right)\le \min_{m \ge \lceil 1/ (1-\epsilon) \rceil}\left( \frac{2m-1}{2m} + \frac{1+2\log \lambda_m - 2m}{2m-2} \epsilon \right).
	$$ 
\end{theorem}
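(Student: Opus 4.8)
The plan is to start from the non-binary PIR code of Theorem~\ref{th::asymptotic non-binary disjoint repair group code} and apply the standard symbol-expansion trick already described in the remark following Lemma~\ref{lem::known2}. Concretely, fix $m$ and let $\C$ be the $[m,s,qs-s,q]$ lifted multiplicity code over $\Sigma=\F_q^{\binom{s+m-1}{m}}$ with $s=\Theta_m(q^{1-\epsilon m/(m-1)})$, which by Theorem~\ref{th:: nonbinary disjoint repair group code} is a $k$-PIR code with $k=(q/s)^{m-1}=\Theta_m(n^\epsilon)$. Each coordinate of $\C$ is an element of $\F_q^{\binom{s+m-1}{m}}$, hence a binary string of length $b\eqdef \ell\binom{s+m-1}{m}=\Theta_m(\log q\cdot s^m)$. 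Replacing every codeword symbol by this bit string turns $\C$ into a binary code $\C'$ of length $N' = Nb = q^m\cdot\Theta_m(s^m\log q)$. The first step is simply to check that this substitution preserves the PIR property: since every recovering set for an information symbol of $\C$ consists of full $\F_q^{\binom{s+m-1}{m}}$-coordinates, its image under the expansion is a union of disjoint bit-blocks from which the corresponding information bits can still be recovered, so $\C'$ is a binary $k$-PIR code with the same $k$ (this is exactly the argument sketched in the remark after Lemma~\ref{lem::known2}).

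Next I would redo the bookkeeping of parameters in terms of $N'$ and $n'$, the length and dimension of the binary code. We have $n'=n\cdot\Theta_m(s^m\log q)$ wait — more carefully, $n' = \dim_{\F_2}\C' = \log_2|\C| = |\mathcal F_{q,s,qs-s}|\cdot \ell$, and $N'-n' = (N-n)\cdot b$ up to the same multiplicative factor, so that $N'-n' = O_m\big((N-n)\,s^m\log q\big)$. Using $N-n=O_m\big(q^m s^{-1}(q/s)^{\log\lambda_m-m}\big)$ from the proof of Theorem~\ref{th::asymptotic non-binary disjoint repair group code} and $n'=\Theta_m(q^m s^{m-1}\ell)$ (since $|\mathcal F_{q,s,qs-s}| = \binom{s+m-1}{m}q^m(1-o(1))$), one substitutes $s=\Theta_m(q^{1-\epsilon m/(m-1)})$ and expresses everything as a power of $n'$. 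The factors of $\log q = \Theta(\log n')$ and the polynomial-in-$q$ factors $s^m$ contribute only an $n'^{o(1)}$ correction, which is absorbed by replacing the PIR parameter $\epsilon$ by $\epsilon-\delta$ for an arbitrarily small $\delta>0$; this is why the statement is phrased with the loss of $\delta$ in the exponent of $k$. Carrying out the arithmetic should yield exponent $\frac{2m-1}{2m}+\frac{1+2\log\lambda_m-2m}{2m-2}\,\epsilon$, i.e.\ the claimed $\delta'_{LM}(\epsilon,m)$, and then taking the best $m\ge\lceil 1/(1-\epsilon)\rceil$ gives the final displayed bound on $\log_n r(n,n^\epsilon)$.

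The routine but delicate part is the exponent algebra: one must be careful that the blow-up factor $b=\Theta_m(s^m\log q)$ multiplies \emph{both} $n$ and $N-n$, so it changes the base of the logarithm ($n'$ versus $n$) as well as the redundancy, and these two effects have to be tracked consistently; getting the coefficient of $\epsilon$ right (the $\frac{1}{2m-2}$ rather than $\frac{1}{m-1}$, coming from the extra $s^m\approx q^{m(1-\epsilon m/(m-1))}$ that now sits inside $n'$) is the main place an error could creep in. I expect the genuine obstacle to be purely this: verifying that all the stray polynomial and logarithmic factors really do collapse into $n^{o(1)}$ and are harmlessly hidden by the $\delta$-slack, so that the clean exponent emerges. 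Everything else — the preservation of the disjoint-recovery structure under bit expansion, and the input estimates on rate and distance — is already supplied by Theorems~\ref{th:: nonbinary disjoint repair group code} and~\ref{th::asymptotic non-binary disjoint repair group code} and can be quoted directly.
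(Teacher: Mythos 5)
Your proposal follows the same route as the paper's: take the non-binary $k$-PIR code of Theorem~\ref{th::asymptotic non-binary disjoint repair group code}, expand each $\F_q^{\binom{s+m-1}{m}}$-symbol into $\log|\Sigma|$ bits, observe that each recovery set of $\C$ yields a recovery set for every bit in the image of the corresponding symbol (so availability is preserved), and then re-express the redundancy as a power of the \emph{binary} dimension. That is exactly what the paper does with $\overline{N},\overline{n},\overline{r},\overline{k},\overline{\epsilon}$, including absorbing the $\log \overline{n}$ factor into the $\delta$-slack in the availability exponent.

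One sentence in your writeup is misleading and worth fixing: you say ``the factors of $\log q$ and the polynomial-in-$q$ factors $s^m$ contribute only an $n'^{o(1)}$ correction.'' The $\log q$ factor is indeed $n'^{o(1)}$ and is what forces the $\delta$-loss, but the blow-up $b = \Theta_m(s^m \log q)$ with $s^m = \Theta_m\bigl(q^{m-\epsilon m^2/(m-1)}\bigr)$ is polynomial in $n'$ and is precisely what rescales the exponent from $\delta_{LM}$ to $\delta'_{LM}$ (it is why $\frac{1}{2m-2}$ appears instead of $\frac{1}{m-1}$). You do acknowledge this in the very next sentence, so I read the earlier one as a slip rather than a genuine gap, but in a final writeup you should keep the two effects clearly separate: the multiplicative $b$ in both $n'$ and $N'-n'$ changes the base of the logarithm and hence the exponent, whereas only the $\Theta(\log q)$ part inside $b$ is subpolynomial and gets hidden in the $\delta$.

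Also note that you need to track the availability parameter relative to the binary dimension: $\overline{k}\ge k = n^\epsilon$, and with $n = \Theta_m(\overline{n}^{(m-1)/(2m-2-\epsilon m)}/\log\overline{n})$, this gives $\overline{k}=\Omega_m(\overline{n}^{\overline{\epsilon}}/\log\overline{n})$ for $\overline{\epsilon}=\epsilon(m-1)/(2m-2-\epsilon m)$, not $\overline{n}^\epsilon$. The paper's proof does this change of variable explicitly; your sketch leaves it implicit under ``redo the bookkeeping,'' which is where the stated $\epsilon-\delta$ loss actually arises.
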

\begin{proof}
	Let $\C$ be a non-binary PIR code as in Theorem~\ref{th::asymptotic non-binary disjoint repair group code}. We construct the binary PIR code $\overline{\C}$ from $\C$ by converting each symbol of the alphabet of size $|\Sigma|=q^{\Theta_m(q^{m - \epsilon m^2/(m-1)})}$ to 
	$$
	\log|\Sigma|=\Theta_m(q^{m - \epsilon m^2/(m-1)}\log q)=\Theta_m(N^{1-\epsilon m/(m-1)}\log N)=\Theta_m(n^{1-\epsilon m/(m-1)}\log n)
	$$ 
	bits. Denote the length and dimension of the binary code by $\overline{N}$ and $\overline{n}$, respectively. Thus, $\overline{n}= \Theta_m(n^{2-\epsilon m/(m-1)}\log n)$ and $\overline{N}=\Theta_m(n^{2-\epsilon m/(m-1)}\log n)$. Therefore, $n=\Theta_m({\overline{n}}^{(m-1)/(2m-2-\epsilon m)}/\log \overline{n})$. Denote by $\overline{r} = \overline{N}-\overline{n}=(N-n)\log|\Sigma|$ the redundancy and by $\overline{k}$ the availability parameter of the new code. 
	
	First, we note that the availability parameter of $\overline{\C}$ is at least that of $\C$. Indeed, we know that each bit in $\overline{\C}$ is a bit among $\log |\Sigma|$ bits representing some symbol in $\C$. For each recovering set of a symbol in $\C$, we get a corresponding recovering set for any bit from the image of this symbol in $\overline{\C}$. Therefore, $\overline{k}\ge k=n^\epsilon\ge \Theta_m({\overline{n}}^{\epsilon(m-1)/(2m-2-\epsilon m)}/(\log \overline{n})^\epsilon)$. Define $\overline{\epsilon}= \epsilon(m-1)/(2m-2-\epsilon m)$. Then $\overline{k}=\Omega_m(\overline{N}^{\overline{\epsilon}} / \log \overline{n})$ and $\epsilon = (2m-2)\overline{\epsilon}/(m-1+\overline{\epsilon} m)$
	
	Second, we rewrite the redundancy $\overline{r}$ in terms of $\overline{n}$ and $\overline{\epsilon}$ as
	\begin{align*}
	\overline{r}&=\overline{N}-\overline{n}=O_m\left(n^{(m-1)/m+(\log \lambda_m - m + 1)\epsilon /(m-1)} n^{1-\epsilon m/(m-1)}\log n\right) \\
	&=O_m\left(n^{(2m-1)/m + (\log \lambda_m - 2m + 1)\epsilon / (m-1)}\log n\right) \\
	 	&=O_m\left(\overline{n}^{(m-1)(2m-1)/(2m^2 - 2m - 2\epsilon m^2) + (\log \lambda_m - 2m + 1)\epsilon/(2m-2-\epsilon m) }\log \overline{n} \right) \\
	 	&=O_m\left(\overline{n}^{(m-1/2)/m + \overline{\epsilon} (1/2 + \log \lambda_m - m)/(m-1) } \log \overline{n}\right).
	\end{align*}
	As for any $\delta>0$ and sufficiently large $n$ we have $\log n < n^{\delta}$, the required statement is proved.
\end{proof}
\subsection{LCCs from lifted multiplicity codes} \label{ss::locally correctable codes}

In this section,  we prove that a lifted multiplicity code is a LCC with certain parameters (cf. Definition~\ref{def::LCC}). More specifically, we describe the self-correction algorithm for lifted multiplicity codes. This algorithm is slightly better in terms of locality and running time than the self-correction algorithm presented in~\cite{kopparty2014high}, but we impose a stronger requirement on $s$, the order of derivatives. It is worth mentioning that the algorithm for multiplicity codes from~\cite{kopparty2014high} also works well for lifted multiplicity codes.

One important ingredient for showing the self-correction algorithm is the following statement about hypergraphs. Recall that a \textit{$s$-partite hypergraph} $H$ is a pair $H=(V,E)$, where $V$ is the vertex set that can be partitioned into sets $V_1,\ldots,V_s$ so that each edge in the edge set $E$ consists of a choice of precisely one vertex from each part. By $K_l^{(s)}$ denote a \textit{complete $s$-partite hypergraph}, whose parts are all of equal size $l$.

\begin{theorem}[Follows from~{\cite[Theorem~1]{erdos1964extremal}}] \label{th::Turan's type theorem}
 		Let $n>sl$, $l>1$.  Then every $s$-partite hypergraph with $n$ vertexes and at least $n^{s-1/l^{s-1}}$ hyperedges contains a copy of $K^{(s)}_l$.
 \end{theorem}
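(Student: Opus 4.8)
The plan is to prove the statement by induction on the number of parts $s$, using the convexity of $x\mapsto\binom{x}{l}$ as the only real tool. The base case $s=2$ is the K\H{o}v\'ari--S\'os--Tur\'an bound for bipartite graphs; the inductive step passes from an $s$-partite host to the \emph{link} of an $l$-subset of one of its parts, which is $(s-1)$-partite. (One could instead cite the non-partite form of \cite{erdos1964extremal} and note that in an $s$-partite $s$-uniform host every copy of $K_l^{(s)}$ automatically has its $s$ vertex-classes aligned with the host's parts $V_1,\dots,V_s$, since any two vertices that may be interchanged inside an edge lie in the same part; I prefer the self-contained argument because it yields the clean threshold $n^{s-1/l^{s-1}}$ without a stray constant.)

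\emph{Base case $s=2$.} Let $H$ be bipartite with parts $V_1,V_2$. For $B\in\binom{V_2}{l}$ let $t(B)$ be the number of $v\in V_1$ with $B\subseteq N(v)$; if some $t(B)\ge l$ we have found $K_l^{(2)}$. Counting incidences, $\sum_{B}t(B)=\sum_{v\in V_1}\binom{d(v)}{l}\ge |V_1|\binom{e/|V_1|}{l}$ by Jensen, where $e$ is the number of edges. When $e$ is of order $n^{2-1/l}$ the average degree exceeds $l-1$, so this quantity is $\Omega_l\!\left(e^l/n^{2l-1}\right)$, which already exceeds $(l-1)\binom{|V_2|}{l}$; hence some $B$ has $t(B)\ge l$. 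The condition $n>2l$ is exactly what guarantees that each part has room for $l$ vertices and that the relevant average is large enough for the convexity step to bite.

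\emph{Inductive step.} Let $H$ be $s$-partite $s$-uniform on $V_1,\dots,V_s$ with $e\ge n^{s-1/l^{s-1}}$ edges. For $A\in\binom{V_1}{l}$ define the $(s-1)$-partite hypergraph $H_A$ on $V_2,\dots,V_s$ whose edges are the tuples $(v_2,\dots,v_s)$ with $(a,v_2,\dots,v_s)\in E(H)$ for every $a\in A$. Then $H$ contains $K_l^{(s)}$ if and only if some $H_A$ contains $K_l^{(s-1)}$. Counting incidences $(A,(v_2,\dots,v_s))$ in two ways gives $\sum_{A}|E(H_A)|=\sum_{(v_2,\dots,v_s)}\binom{c(v_2,\dots,v_s)}{l}$, where $c(\cdot)$ is the codegree; by convexity the right-hand side is at least $(\text{number of tuples})\binom{\bar c}{l}$ with $\bar c$ the average codegree. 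Dividing by $\binom{|V_1|}{l}$, taking the best $A$, and using AM--GM to bound the product of the remaining part sizes, one obtains an $A$ with $|E(H_A)|=\Omega_{s,l}\!\left(e^l/n^{sl-s+1}\right)$. Since $l\cdot\left(s-1/l^{s-1}\right)-(sl-s+1)=(s-1)-1/l^{s-2}$, substituting $e\ge n^{s-1/l^{s-1}}$ makes $|E(H_A)|\ge n^{(s-1)-1/l^{s-2}}$, and the induction hypothesis applied to $H_A$ produces a $K_l^{(s-1)}$, hence a $K_l^{(s)}$ in $H$.

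\emph{Main obstacle.} There is no conceptual difficulty, only bookkeeping: one must keep the multiplicative constants $C_s$ accumulated over the $s$ convexity steps under control (they obey a recursion of the shape $C_s\asymp 2\,C_{s-1}^{1/l}$ and stay bounded), and one must check at every level that the average codegree exceeds $l-1$, so that Jensen's inequality gives a genuine lower bound rather than the trivial $0$. Both are ensured precisely by the hypothesis $n>sl$ together with the assumed edge count, which forces all the averages in play to be polynomially large in $n$. In the application one takes $n=\Theta_m(q)$, $s=m-1$, and $l$ equal to $s$ (the order of derivatives), so the threshold $n^{(m-1)-1/s^{m-2}}$ is $o(q^{m-1})$ exactly when $s^{m-2}=o(\log q)$, which is the stated regime.
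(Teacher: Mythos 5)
The paper does not prove Theorem~\ref{th::Turan's type theorem} at all; it cites it verbatim from Erd\H{o}s's 1964 paper \cite{erdos1964extremal}. What you have written is essentially a reconstruction, by induction on the number of parts $s$, of the double-counting/convexity argument that appears in that reference (and in the K\H{o}v\'ari--S\'os--Tur\'an proof to which it reduces at $s=2$): pass to the link $H_A$ of an $l$-subset $A$ of one part, count the incidences $(A, \text{tuple})$ in two ways, lower-bound $\sum_A |E(H_A)|$ by $M\binom{e/M}{l}$ via Jensen with $M=\prod_{i\ge 2}|V_i|$, and recurse. Your exponent bookkeeping $l(s-1/l^{s-1})-(sl-s+1)=(s-1)-1/l^{s-2}$ is correct, so the inductive threshold closes.

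Two points worth being more careful about, though neither is fatal. First, the phrase ``the condition $n>sl$ is exactly what guarantees that the relevant average is large enough for the convexity step'' overstates the role of that hypothesis: $n>sl$ by itself does not make $e/M$ large. What actually saves the argument for small $n$ is that the theorem is vacuous there, since an $s$-partite hypergraph on $n$ vertices can have at most $(n/s)^s$ hyperedges, and $(n/s)^s<n^{s-1/l^{s-1}}$ unless $n\ge s^{s\,l^{s-1}}$; below that range no hypergraph satisfies the premise, so there is nothing to prove, and above it all your averages are indeed polynomially large. Erd\H{o}s's original statement has an $n>n_0(l,r)$ clause for precisely this reason. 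Second, you claim the multiplicative constants $C_s$ ``stay bounded'' — in fact they work in your favor: the AM--GM step gives $M\le (n/(s-1))^{s-1}$, so after dividing by $\binom{|V_1|}{l}\le n^l/l!$ the prefactor one inherits is $(s-1)^{(s-1)(l-1)}\ge 1$, and the base case constant $(l-1)^{1/l}2^{-2+1/l}<1$ is absorbed by the unequal split of $n$ vertices between two parts. So the clean threshold $n^{s-1/l^{s-1}}$ does survive, but it is worth making that explicit rather than waving at a recursion $C_s\asymp 2C_{s-1}^{1/l}$, which as written looks like it could grow.

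In short: this is the right (and, as far as the literature goes, the only) proof of the statement the paper merely cites; it is not a ``different route'' so much as the route Erd\H{o}s took, and the only genuine loose end is the remark about what $n>sl$ accomplishes.
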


\begin{theorem}
Let $m$ be a fixed positive integer.    For $s^{m-2}=o(\log q)$ and a real $\alpha<1/4$, the $[m, s, qs-r, q]$ lifted multiplicity code is a $((q-1)s^{m-1}, \alpha \Delta_{min}, 2\alpha +o(1))$-locally correctable code, where $\Delta_{min}:=\left \lceil\frac{r-s+1}{s}\right\rceil\frac{q-s}{q^2}$.
\end{theorem}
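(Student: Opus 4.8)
The plan is to build a randomized algorithm $\A$ that, on input $(\y,i)$ with $i\in[q^m]$ indexing a point $\w_0\in\F_q^m$, outputs $c_i=f^{(<s)}(\w_0)$, where $f$ is the (unknown) polynomial producing the codeword $\c$ with $\Delta(\y,\c)<\alpha\Delta_{min}$. The algorithm will select $s^{m-1}$ directions $\v=(1,v_2,\dots,v_m)$ whose coordinates $(v_2,\dots,v_m)$ range over a subcube $Q_2\times\dots\times Q_m$ with $Q_j\subseteq\F_q$, $|Q_j|=s$, and for each such $\v$ it queries the $q-1$ coordinates $\y_{\w_0+t\v}$, $t\in\F_q\setminus\{0\}$; this is exactly $(q-1)s^{m-1}$ queries. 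From the jet stored at $\y_{\w_0+t\v}$ it forms the directional Hasse derivatives $g_{\v}^{(j)}(t)=\sum_{\deg(\i)=j}f^{(\i)}(\w_0+t\v)\v^{\i}$, $0\le j<s$, which in the noiseless case are the values at the $q-1$ nonzero points of a univariate order-$s$ multiplicity codeword of a polynomial of degree $<qs-r$ (by definition of the code, $f|_L$ is equivalent up to order $s$ to such a polynomial). It decodes this univariate code to obtain $g_{\v}^{(<s)}(0)$ for each of the $s^{m-1}$ directions, and finally reconstructs $f^{(<s)}(\w_0)$ via Lemma~\ref{lem::good direction and multivariate polynomial}.

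Call a direction $\v$ \emph{good} if strictly fewer than $d_0/2$ of the coordinates $\y_{\w_0+t\v}$, $t\ne 0$, disagree with $\c$, where $d_0=\lceil(r-s+1)/s\rceil$ lower-bounds (via Lemma~\ref{lem::multiplicity}, exactly as in the distance bound of Theorem~\ref{th:: code rate of lifted multiplicity codes}) the minimum distance, restricted to $q-1$ coordinates, of the univariate order-$s$ multiplicity code of degree $<qs-r$; recall also that $\Delta_{min}=d_0(q-s)/q^2$. On a good direction the univariate decoder returns the correct polynomial, hence the correct $g_{\v}^{(<s)}(0)$, so $\A$ is correct whenever all $s^{m-1}$ directions of its subcube are good. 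The $q^{m-1}$ lines through $\w_0$ with direction of the form $(1,v_2,\dots,v_m)$ are pairwise disjoint away from $\w_0$, so the number of corrupted coordinates on all of them together is at most $\Delta(\y,\c)q^m<\alpha\Delta_{min}q^m=\alpha d_0(q-s)q^{m-2}$. By Markov's inequality, fewer than $2\alpha(q-s)q^{m-2}<2\alpha q^{m-1}$ of these lines are bad, so at least $(1-2\alpha)(q-s)q^{m-2}=(1-2\alpha-o(1))q^{m-1}$ directions are good, which, since $\alpha<1/4$, is more than $\tfrac12 q^{m-1}$.

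Record the good directions as the hyperedges of an $(m-1)$-uniform, $(m-1)$-partite hypergraph on $m-1$ disjoint copies of $\F_q$ (indexed by $v_2,\dots,v_m$). The count above shows it has at least $(1-2\alpha-o(1))q^{m-1}$ hyperedges, while the threshold in Theorem~\ref{th::Turan's type theorem}, applied with $m-1$ parts and clique size $l=s$, is $q^{(m-1)-1/s^{m-2}}=q^{m-1}q^{-1/s^{m-2}}$, which is $o(q^{m-1})$ precisely because $s^{m-2}=o(\log q)$. Hence for $q$ large the number of good directions exceeds this threshold, so Theorem~\ref{th::Turan's type theorem} yields a complete $(m-1)$-partite sub-hypergraph $K^{(m-1)}_s$; that is, there exist $Q_2,\dots,Q_m$ of size $s$ for which every one of the $s^{m-1}$ directions of $\{1\}\times Q_2\times\dots\times Q_m$ is good. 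Running $\A$ on this subcube recovers $c_i$ with certainty.

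The main obstacle is converting this \emph{existence} statement into a \emph{randomized} choice of subcube that errs with probability only $2\alpha+o(1)$: a crude union bound over the $s^{m-1}$ lines of a uniformly random subcube would lose a factor of $s^{m-1}$. I expect to overcome this by having $\A$ draw its subcube from a suitably symmetric family (so each of the $q^{m-1}$ directions through $\w_0$ is about equally likely, and the reconstruction sees at most a $(2\alpha+o(1))$-fraction of bad directions) together with a version of the reconstruction of Lemma~\ref{lem::good direction and multivariate polynomial} robust to a correctable number of corrupted directions — the robustness being supplied, for the low-order directional derivatives, by a majority argument that needs $2\alpha<1/2$, i.e.\ $\alpha<1/4$, and, for the combination, by the hypergraph argument above, which is exactly where $s^{m-2}=o(\log q)$ is used. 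The remaining ingredients — the univariate decoding, the Markov estimate, the identity $\Delta_{min}=d_0(q-s)/q^2$, and Lemma~\ref{lem::good direction and multivariate polynomial} — are routine.
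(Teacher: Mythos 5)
Your proposal reproduces the paper's algorithmic skeleton faithfully (a subcube of lines through $\w_0$, univariate order-$s$ multiplicity decoding on each line, reconstruction via Lemma~\ref{lem::good direction and multivariate polynomial}), and your accounting of good directions and of the Erd\H{o}s threshold matches the paper's analysis. You are also right to be uneasy about the last step: passing from the \emph{existence} of an all-good subcube to a $2\alpha+o(1)$ failure probability for a \emph{uniformly random} one is a genuine gap, and it is in fact present in the paper's own argument as well.

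The paper asserts that the success probability is at least the number of copies of $K^{(m-1)}_s$ in the good-direction hypergraph $H$ divided by $q^{m-1}$, and then concludes $\geq 1-2\alpha+o(1)$ from an iterated ``find-a-clique-and-spoil-it'' count of $\geq(1-2\alpha+o(1))q^{m-1}$ such copies. But the algorithm chooses $Q_2,\dots,Q_m$ uniformly and independently among size-$s$ subsets of $\F_q$, so the correct denominator is the number of subcubes, $\binom{q}{s}^{m-1}=\Theta_m(q^{s(m-1)})$, not $q^{m-1}$; measured against that, the lower bound of $(1-2\alpha)q^{m-1}$ on the numerator gives a vanishing success probability, not $1-2\alpha+o(1)$. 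The issue is not merely a weak estimate: for $m=2$ the hypothesis $s^{m-2}=o(\log q)$ imposes nothing on $s$, and spreading the $2\alpha q$ bad lines evenly makes the fraction of all-good $s$-subsets $Q_2$ about $(1-2\alpha)^{s}\ll 1-2\alpha$ once $s$ grows. So the ``suitably symmetric family plus robust, majority-style reconstruction'' you gesture at is the essential missing ingredient, not a finishing touch, and it is not supplied by the paper either. A plausible repair is to replace Step~3 by a reconstruction that tolerates a $(2\alpha+o(1))$-fraction of corrupted directional derivatives (closer in spirit to the Kopparty--Saraf--Yekhanin self-corrector for multiplicity codes), rather than demanding that all $s^{m-1}$ lines of a single randomly drawn subcube be clean.
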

\begin{proof}
    We prove this theorem by presenting a new self-correction algorithm $\A$ for lifted multiplicity codes. Consider a vector $\y=(y_1, \ldots, y_{q^m})=(y_{\w})|_{\w\in \F_q^m}$, which is a noisy version of the evaluation of the polynomial $f$. Say that we want to correct the value $f^{(<s)}(\w_0)$.  The algorithm $\A$ consists of three steps. \medskip
    
        \textbf{Step 1:}
        Choose sets $Q_2, Q_3, \ldots, Q_m$, $Q_i\subset \F_q$, $|Q_i|=s$, independently according to the uniform distribution over all subsets of size $s$. Form a set $V$ of directions $\v=(1, v_2, \ldots, v_m), v_i\in Q_i$. \medskip
        
       \textbf{Step 2:}
        For every $\v \in V$ define a polynomial $g_{\v}(T):=f(\w_0+T\v)$. By the definition of lifted multiplicity codes we know that this polynomial agrees with some uni-variate polynomial of degree less than $qs-r$ on its first  $s-1$ derivatives. Apply the decoding algorithm for a uni-variate multiplicity code from~\cite{kopparty2014high, sudan2001ideal} to noisy evaluations of $g_{\v}(T)$ to obtain an estimation $\hat{g}_{\v}(T)$ of the correct polynomial $g_{\v}(T)$. Note that this decoding algorithm can correct up to $\lfloor (d_{min}-1)/2 \rfloor$ errors, where $d_{min}:=\lceil\frac{r+1}{s}\rceil$.\medskip
        
       \textbf{Step 3:}
        Using Lemma~\ref{lem::good direction and multivariate polynomial} and polynomials $\hat{g}_{\v}(T)$, recover the value $f^{(<s)}(\w_0)$ to obtain $\hat f^{(<s)}(\w_0)$.\medskip
        
    We now present an analysis of the algorithm. Call a direction $\v$ \textit{good}, if the line $\w_0+T\v$ contains at most $\lfloor (d_{min}-1)/2 \rfloor$ errors. Note that if a direction $\v$ is good, then $\hat{g}_{\v}(T)\equiv_s g_{\v}(T)$. Thus, if all directions from $V$ are good, the algorithm  recovers the symbol correctly, i.e., $\hat f^{(<s)}(\w_0)=f^{(<s)}(\w_0)$. In the following we derive a bound on the probability that all directions from $V$ are good.
    
    Introduce an $(m-1)$-uniform $(m-1)$-partite hypergraph $H$, each part of which has size $q$. Index the elements within each part of the hypergraph with elements of $\F_q$. For every good direction $\v=(1, v_2, \ldots, v_{m})$, draw a hyperedge $(v_2, \ldots, v_{m})$ in $H$, where $v_i$ is a vertex from the $(i-1)$th part. 
    Then the probability of the successful recovery of $f^{(<s)}(\w_0)$ is lower bounded by the number of copies of $K^{(m-1)}_s$ in $H$ divided by $q^{m-1}$.
    
    The total number of good directions (or hyperedges in $H$) is at least 
    $$
    q^{m-1}- \frac{\alpha \Delta_{min}q^m}{\lfloor (d_{min}-1)/2 \rfloor}=q^{m-1}(1-2\alpha +o(1)).
    $$
    
     We show how we can find a large number of copies of $K^{(m-1)}_s$ in $H$.
 As long as the number of hyperedges in $H$ is greater than $((m-1)q)^{m-1-1/s^{m-2}}$ we can find such a copy by Theorem~\ref{th::Turan's type theorem}. Then, we can spoil this copy by erasing one of its hyperedges and repeat the process for the obtained hypergraph. Obviously, all constructed copies of $K^{(m-1)}_s$ would be distinct. By this procedure, we can find at least $$q^{m-1}(1-2\alpha+o(1))-((m-1)q)^{m-1-1/s^{m-2}}=q^{m-1}(1-2\alpha+o(1))$$ copies of $K^{(m-1)}_s$. Therefore, the probability of successful decoding is at least $1-2\alpha+o(1)$.

\end{proof}
\section{Conclusion}\label{ss::conclusion}
In this paper, we have investigated the rate, the distance, the availability and the self-correction properties of lifted multiplicity codes based on the evaluations of $m$-variate polynomials and discussed how to use them to construct PIR codes and LCCs. For some parameter regimes, lifted multiplicity codes are shown to have a better 
rate/distance/availability/locality trade-off than other known constructions. It would be interesting to see  whether this class of codes can also be of use for other applications and settings.

\section{Appendix}
\subsection{Proof of Proposition~\ref{prop::code cardinality}}\label{ss::injective map}
The proof is twofold, we need to show that
\begin{enumerate}[wide=\parindent]%
    \item[\textit{(Distinction)}] the evaluation of every monomial $\X^{\d}$ with $\deg_q(\d)\leq s-1$, which we refer to as a \textit{type-$s$} monomial, gives a unique word
    \item[\textit{(Inclusion)}] these words are contained in the $[m,s,d,q]$ lifted multiplicity code as in Defintion~\ref{def::lifted mult code} .
\end{enumerate}

To show that the words are distinct, it is sufficient to prove that for an arbitrary non-trivial linear combination, written as $f(\X)$, of type-$s$ monomials, its evaluation is not equal to the all-zero codeword. Our proof is a straightforward generalization of~\cite[Lemma 14]{li2019lifted}. 
	
	We prove the proposition by induction on $m$ and $s$. More precisely, we deduce the statement for $(m, s)$ from the cases for $(m-1, s)$ and $(m, s-1)$. The base case $m=1$ is equivalent to~\cite[Lemma~11]{li2019lifted}. In the base case $s=1$ the degree of each variable in $f$ is at most $q-1$. Then the proposition follows from DeMillo–Lipton–Zippel Theorem~\cite{demillo1977probabilistic,zippel1979probabilistic}, which states that such polynomial can't have more than $q^m-(q-(q-1))^m=q^m-1$ zeroes.
	
	Now we prove the inductive step. Assume that $f(\X)$ is a non-trivial linear combination of type-$s$ monomials such that $f(\X)\equiv_s 0$. Consider the polynomial $g(X_1,\ldots,X_{m-1}):=f(X_1,\ldots, X_{m-1}, c)$ in $m-1$ variables, where $c\in \F_q$ is fixed. By the inductive hypothesis, we conclude that $g\equiv_s 0$. Hence, $(X_m-c)$ divides $f(\X)$ for all $c\in \F_q$, so $(X_m^q-X_m)$ divides $f(\X)$. Therefore, $f(\X)$ can be represented as $f(\X)=(X_m^q-X_m)g(\X)$.
	
	It is easy to see that $g(\X)$ is a linear span of type-$(s-1)$ monomials. Taking the $\i$th derivative of $f(\X)$ for any $\i\in\Z_{\ge}^m$ with $i_m\geq 1$ we obtain
	$$
	f^{(\i)}(\X)=(X_m^q-X_m)g^{(\i)}(\X)-g^{(\j)}(\X),
	$$
	where $\j=(i_1,\ldots, i_{m-1}, i_m-1)$. The left-hand side is equal to zero for all $\x\in \F_q^m$ and $\i\in\Z_{\ge}^m$ with $\deg(\i)\le s-1$. The right-hand side equals to $-g^{(\j)}(\x)$ for all $\x\in \F_q^m$ and all $\j\in\Z_{\ge}^m$ with $\sum\limits_{l=1}^{m-1} j_l<s-1$. By the induction hypothesis $g(\X)$ is the zero polynomial, thus, $f(\X)$ is the zero polynomial as well. This concludes the proof of the distinction property.

To show the inclusion, we prove that every $(d,s)$-good monomial $f(\X)=\X^\d$ over $\F_q$ satisfies the property that for any line $L\in\L_m$, the restriction $f|_L$ is equivalent up to order $s$ to an uni-variate polynomial of degree less than $d$. Let a line $L$ be parameterized as $(\a T+\b)|_{T\in\F_q}$ and $\0$ be the all-zero vector. Then, we have that 
\begin{align*}
f|_L &=(\a T+\b)^\d \\
&= \sum_{\0\le \i\le \d}\prod_{j=1}^{m}a_j^{i_j}b_j^{d_j-i_j}\binom{d_j}{i_j} T^{i_j}\\
&\equiv_s \sum_{k=0}^{qs-1} c_k T^k := f^*(T),
\end{align*}
where $c_k$ denotes the coefficients of the unique polynomial of degree $\leq qs-1$ that is equivalent to $f|_L$ (cf. Proposition~\ref{pr::reducing the power}). Recall that $s$ and $q$ are powers of $2$. Hence, we have $f|_L(T) = f^*(T) \pmod{T^{qs} + T^s}$ by Proposition~\ref{pr::reducing the power}, so the coefficients $[T^s]f|_L$ that contribute to the coefficient $c_k$ are exactly those for which $s=\deg(\i) \Modsp{q}{s} = k$, and we obtain
\begin{equation}\label{eq::coefficients equivalent polynomial}
c_k\eqdef \sum_{\substack{\0\le \i\le \d \\ \deg(\i)\Modsp{q}{s} = k }}\prod_{j=1}^{m}a_j^{i_j}b_j^{d_j-i_j}\binom{d_j}{i_j}.
\end{equation}
By Definition~\ref{def::bad (d^*,s) monomial}, for $k\ge d$, there is no $\i\in\Z_{qs}^m$ such that $\i\le_2\d$ and $\deg(\i)\Modsp{q}{s} = k$. Thus, for $k\ge d$ and every $\i$ used in the summation of \eqref{eq::coefficients equivalent polynomial}, there exists some coordinate $j\in[m]$ such that $i_j\not\le_2 d_j$. By Lucas's Theorem (e.g., see~\cite{guo2013new,li2019lifted}), for integers $d_j=(d_j^{(\ell-1)},...,d_j^{(0)})_2$ and $i_j=(i_j^{(\ell-1)},...,i_j^{(0)})_2$ it holds that 
\begin{equation*}
    \binom{d_j}{i_j} = \prod_{\xi=0}^{\ell-1} \binom{d_j^{(\xi)}}{i_j^{(\xi)}} \mod 2  .
\end{equation*}
It follows that if $i_j\not\le_2 d_j$ the coefficient $\binom{d_j}{i_j}=0$ in $\F_q$ (as $q$ is a power of two) and therefore $c_k=0$ for all $k\ge d$.

We have proved that the restriction of $\X^\d$ to any line is an uni-variate polynomial of degree at most $d-1$. Therefore, the $[m,s,d,q]$ lifted multiplicity code includes the codewords 
\begin{equation*}
\{(\X^\d)|_{\a\in\F_q^m}\ : \ \X^\d \in \mathcal{F}_{q,s,d} \} \ . %
\end{equation*}
The inclusion of their linear combinations over $\F_q$ follows trivially from the proof.
\qed

\subsection{Lifted multiplicity code and lifted multiplicity monomial code}\label{ss::equivalenceLiftedRS}

We now give an example showing that lifted multiplicity codes are not necessarily spanned by the set of good monomials. %
 Let $d=qs-2$, $s=2$, and $q>2$. 
 Denote by $M_1(\X)$ and $M_2(\X)$ the monomials
\begin{align*}
  M_1(\X) &:= \X^{\d^{(1)}} = X_1^{qs-2}X_2\\
  M_2(\X) &:= \X^{\d^{(2)}} = X_1^{(s-1)q-1}X_2^q\ , 
\end{align*}
so $d_{1}^{(i)} = qs-2$, $d_{2}^{(1)} = 1$, $d_{1}^{(2)}=(s-1)q-1$, and $d_{2}^{(2)}=q$.
Both monomials are type-$s$ as
\begin{align*}
 \deg_q(\d^{(1)}) = \deg_q(\d^{(2)}) = qs-1 < 2(q-1) + (s-1)q
\end{align*}
Further, both are $(d,s)$-bad, as the vectors $\i^{(1)} = \d^{(1)}$ and $\i^{(2)} = \d^{(2)}$ fulfill Definition~\ref{def::bad (d^*,s) monomial} for each monomial, respectively. Also, their evaluation is not contained in an $[m,s,d,q]$ lifted multiplicity code, since for the line $(0,w_2)+(1,v_2)T \in \L_2$ we have
\begin{align*}
  [T^{qs-1}] M_1(T,w_2+v_2 T) &= v_2\\
  [T^{qs-1}] M_2(T,w_2+v_2 T) &= v_2^q \ .
\end{align*}
However, the evaluation of their sum, i.e., the polynomial
\begin{align*}
  P(\X) := M_{1}(\X) + M_{2}(\X) \ ,
\end{align*}
is contained in the $[m,s,d,q]$ lifted multiplicity code as
\begin{align*}
  [T^{qs-1}] P(w_1 + v_1T,w_2+v_2 T) &= [T^{qs-1}]  M_1(w_1 + v_1T,w_2+v_2 T)+ [T^{qs-1}]  M_2(w_1 + v_1 T,w_2+v_2 T) \\
  &= v_1^{qs-2}v_2 + \underbrace{v_1^{(s-1)q-1}v_2^q}_{\stackrel{\mathsf{(a)}}{=} v_1^{qs-2}v_2} = 0 \ ,
\end{align*}
where $\mathsf{(a)}$ holds because $v_1, v_2 \in \F_q$.

\subsection{Multiplicity codes vs. lifted multiplicity codes} \label{ss::improvement LMC vs MC}

To provide some intuition and show how lifting can improve the rate of multiplicity codes, we give an example for a fixed set of parameters here. 
Let $m=s=2$, $q=4$, and $d=qs-1=7$. Consider the monomial $M(\X) := X_1^2 X_2^6$. The degree of this monomial is $\deg(M(\X))=8>d$, so its evaluation is not contained in the $[2,2,7,4]$ multiplicity code, as it only contains evaluations of degree $<d$ polynomials.

By Definition~\ref{def::lifted mult code}, the evaluation of $M(\X)$ is contained in the $[2,2,7,4]$ lifted multiplicity code if for every line $L\in \mathcal{L}_m$ there exists a polynomial $g(T)\in \mathcal{F}_{q,d}$ such that the restriction of $M(\X)$ to $L$ is equivalent to $g(T)$. First, note that $M(\X)$ is a type-$s$ monomial, as $\deg_q(M(\X)) = 1 \leq s-1$. Its evaluation in an arbitrary line $L\in\mathcal{L}_2$ is given by
\begin{align*}
    M(\X)|_L &= (w_1 + v_1T)^2 (w_2+v_2T)^6\\
    &= (w_1^2 + v_1^2T^2)(w_2^6+w_2^4v_2^2 T^2 + w_2^2v_2^4 T^4 + v_2^6T^6) \\
    &= w_1^2w_2^6 + (w_1^2w_2^4v_2^2 + v_1^2w_2^6)T^2 + (w_1^2w_2^2v_2^4 + v_1^2w_2^4v_2^2)T^4 + (w_1^2v_2^6+v_1^2 w_2^2v_2^4)T^6 + v_1^2v_2^6T^8 \ .
\end{align*}
By Proposition~\ref{pr::reducing the power} and because $s$ and $q$ are powers of $2$, we know that there exists an equivalent polynomial $M^*(T)$ of degree at most $qs-1=7$ such that $M(\X)|_L \equiv_s M^*(T) \pmod{T^{8}+T^2}$. Here, we obtain this polynomial by substracting $v_1^2v_2^6(T^{8}+T^2)$ from $M(\X)|_L$, which gives
\begin{align*}
    M^*(T) =w_1^2w_2^6 + (w_1^2w_2^4v_2^2 + v_1^2w_2^6+v_1^2v_2^6)T^2 + (w_1^2w_2^2v_2^4 + v_1^2w_2^4v_2^2)T^4 + (w_1^2v_2^6+v_1^2 w_2^2v_2^4)T^6 \ .
\end{align*}
As the degree of this polynomial is $\deg(M^*(T))<d=7$ its evaluation is contained in the $[2,2,7,4]$ lifted multiplicity code, thereby increasing its dimension compared to the $[2,2,7,4]$ multiplicity code.

\bibliographystyle{IEEEtran}
\bibliography{lifted}
\end{document}